\theoremstyle{definition} \newtheorem{define} {Definition} [section]
\theoremstyle{definition} 
\newtheorem {theorem} {Theorem}
\newtheorem {corollary} {Corollary}[section]
\newtheorem {lemma} {Lemma}
\newcommand{\kbb}[1]{{\left[#1\right]}}
\newcommand{\kb}[1]{{\left[#1\right]}}
\newcommand{\al}{\mathcal{A}}
\newcommand{\trd}[1]{\left|\left| #1 \right| \right|}
\newcommand{\repeater}{\mathcal{R}}
\newcommand{\st}{\text{ } : \text{ }}
\newcommand{\dc}{\sim_\delta}
\newcommand{\Hmin}{H_\infty}
\newcommand{\bit}{\texttt{bt}}
\newcommand{\phase}{\texttt{ph}}
\newcommand{\B}{B}
\newcommand{\ideal}{\upsilon}
\newcounter{AssumptionCount}
\newcommand{\assumption}[0]{$ $ \newline\textbf{Assumption \arabic{AssumptionCount}: }\refstepcounter{AssumptionCount}}
\newcommand{\leakEC}{\texttt{leak}_{EC}}
\newcommand{\net}{\mathcal{N}}
\begin{document}
\title{Security of Partially Corrupted Repeater Chains}
\author{
Adrian Harkness\footnote{AH: Industrial and Systems Engineering, Lehigh University, Bethlehem PA USA. Email: \texttt{adh323@lehigh.edu}}, 
Walter O. Krawec\footnote{WOK: School of Computing, University of Connecticut, Storrs CT USA. Email: \texttt{walter.krawec@uconn.edu}}, 
Bing Wang\footnote{BW: School of Computing, University of Connecticut, Storrs CT USA. Email: \texttt{bing@uconn.edu}}}

\date{}

\maketitle

\begin{abstract}
Quantum Key Distribution allows two parties to establish a secret key that is secure against computationally unbounded adversaries.  To extend the distance between parties, quantum networks, and in particular repeater chains, are vital.  Typically, security in such scenarios assumes the absolute worst case: namely, an adversary has complete control over all repeaters and fiber links in a network and is able to replace them with perfect devices, thus allowing her to hide her attack within the expected natural noise.  In a large-scale network, however, such a powerful attack may be infeasible.  In this paper, we analyze the case where the adversary can only corrupt a contiguous subset of a repeater chain connecting Alice and Bob, while some portion of the network near Alice and Bob may be considered safe from attack (though still noisy).  We derive a rigorous finite key proof of security assuming this attack model and show that improved performance and noise tolerances are possible.
\end{abstract}

\section{Introduction} \label{sec:intro}

Quantum key distribution (QKD) allows for the establishment of secure secret keys between two parties, Alice and Bob, the security of which is guaranteed by the laws of physics.  This is unlike classical public key cryptography which necessarily requires computational assumptions placed on the adversary in order to achieve security.  See \cite{Scarani09:QKD-survey,Pirandola20:QKD-advances,amer2021introduction} for a general survey on QKD.

One of the main limiting factors of QKD performance is distance---since transmissivity decays  exponentially with distance, achieving efficient QKD between two parties that are far away from each other remains a tremendous challenge. One promising solution for long-distance QKD is through quantum networks, e.g., 
\emph{quantum internet}~\cite{kimble2008quantum,caleffi2018quantum,wehner2018quantum}. Quantum networks consist of quantum repeaters that are capable of creating shared end-to-end entanglement between parties,  even if the
repeaters are controlled by an adversary.
As such, they provide a much stronger security guarantee than the current day \emph{trusted node networks}~\cite{toliver2003experimental,le2007stochastic,wen2009multiple,tanizawa2016routing,yang2017qkd,mehic2019novel,li2020mathematical,tysowski2018engineering}, where the trusted nodes
must be trusted.

In almost all QKD performance analyses, the security of the system assumes the absolute worst case, namely that the adversary controls the entire region outside of Alice and Bob's labs.  This implies that the adversary controls all repeaters and fiber links in the entire network, and can even replace them with ideal, noiseless devices, and thus ``hide'' within the expected natural noise.  Considering that such networks are meant to allow for long-distance QKD operation, this is an unrealistic scenario.  In any realistic operational scenario, it is likely that an adversary can only control a strict subset of repeaters and fiber links.  Furthermore, it is also realistic to assume that an adversary can only control a contiguous section of the network (i.e., not scattered, unconnected, repeaters, but instead a connected region of the network, based on the location of the attacker).  

In this work, we consider the above 
realistic {\em partially corrupted} network scenario. As the first step, we consider the simplest form of quantum repeater networks, a {\em quantum repeater chain},  which consists of two end users (Alice and Bob) and a sequence of quantum repeaters connecting the end users; see Figure~\ref{fig:network}.  These repeaters perform Bell swap operations to create end-to-end entanglement between two distant parties (we discuss the details of their operation later).  Once this end-to-end entanglement is established, parties can run the E91~\cite{Ekert91:E91} QKD protocol (the entanglement based version of BB84 \cite{QKD-BB84}) to establish a shared secret key. As we shall see, even in this simple topology of a repeater chain, analyzing security assuming partial corruption is a tremendous challenge. Our approach and results provide insights that can help to analyze the security of more complex quantum networks. 

Naturally, a partially corrupted network 
is an assumption; however, unlike classical key distribution which requires \emph{computational} assumptions, this assumption on the attack model is grounded in physical limitations of the adversary.  See Section \ref{sec:network-assumptions} for a more formal description of our assumptions and security model.  So far, security analyses of QKD networks assuming alternative attack models, such as this, have received very little attention (see Section \ref{sec:prior-work} for a discussion on prior work). Yet analyzing such scenarios is important for a variety of reasons.  For instance, we show that drastically improved key-rates and noise tolerances are possible (as we show in Section \ref{sec:eval}), potentially allowing for early QKD networks to perform at more optimistic levels.  Also, by having rigorous proofs of security that can handle alternative, perhaps more realistic, attack models, users of early QKD network systems can have guidelines on how much of the network, and exactly which locations, need to be protected physically, in order to achieve a certain desired key-rate.

In this work,
we analyze the performance and security of the E91 protocol operating on a repeater chain consisting of $c$ repeaters, where the adversary is allowed to control a subset of contiguous repeaters and links.  Alternatively, one may consider a repeater chain where some of the repeaters and links near Alice and Bob are trusted and better secured.  We assume that users can upper-bound the number of adversarial repeaters or, alternatively, can lower-bound the number of honest repeaters in the network.  There are two motivating examples as to why this is a reasonable assumption.  First, as discussed above, it is unrealistic that an adversary can control the entire, lengthy, repeater chain and replace the entire network with adversarial devices, performing coherent attacks across a large distance.  Second, it is likely that at least some repeaters near Alice and Bob will be placed in a secure and trusted location, which an adversary cannot easily gain access to (similar to how trusted nodes are considered physically secure - however securing a repeater is even easier as it never stores the secret key).  Thus, one can assume that the first $k_A$ repeaters connected to Alice are secure while the last $k_B$ repeaters connected to Bob are secure, leaving the middle $c - k_A-k_B$ repeaters as potentially under the control of the adversary; see Figure \ref{fig:network}. Note that it may be $k_A = 0$ and/or $k_B = 0$.  If both are zero (i.e., if there are no trusted repeaters), then our key-rate result converges asymptotically to the standard BB84 expression.

\begin{figure}
   \centerline{\includegraphics[width=.75\linewidth, trim = 1.0cm 17.5cm 1.cm 1.1cm, clip]{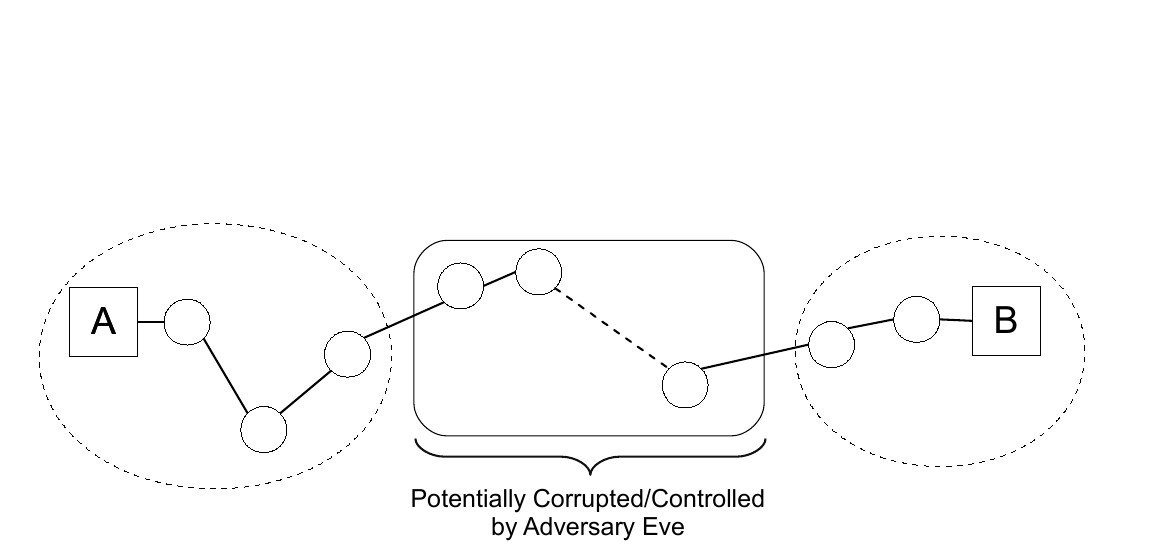}}
\caption{An example repeater chain where the repeaters (solid circles) and links near parties Alice and Bob, (those within the dashed circles), may be considered ``safe'' or trusted, while all other repeaters and links may, or may not, be adversarial.  Note that even though the repeaters near Alice and Bob may be trustworthy, they are still noisy. }
\label{fig:network}
\end{figure}

Of course, even though there may be multiple honest repeaters and fiber links connecting them, these honest sub-networks are still noisy, and will still introduce detectable noise into the final shared entangled pairs.  Thus, when Alice and Bob run a QKD protocol, the observed noise is a function both of Eve's attack and the natural noise in the trusted sub-network.  Therefore, for a given observed noise level $Q$, one would expect that Eve's information is not nearly as high as it would be in the standard assumption case, where all noise is the result of an attack.  However, formalizing this in the finite key setting, where a bound on the quantum min entropy \cite{renner2008security} is required, is non-trivial.  With min entropy, one must take into account that it is in some ways a ``worst-case'' entropy and, so, we must be careful when analyzing the system that Eve is not always able to ``hide'' in the natural noise.  We must also deal with finite sampling imprecisions, and also the fact that Eve can interact non-trivially with the honest repeater network.  Finally, we must also take into account that the repeater network, including the adversary's portion of it, must send classical messages to users of the protocol, in order for them to apply a correcting Pauli gate.  Taken together, these issues make finite key analyses a challenge.

Finite key analyses, however, are vital to understanding the potential performance of a quantum system.  Asymptotic analyses are highly interesting, and useful, as theoretical upper-bounds; however finite key scenarios, where Alice and Bob only utilize the network for a finite amount of rounds, are important for understanding the potential performance in more realistic scenarios.  Our work shows how to bound the quantum min entropy between Alice and the adversary Eve in this network scenario, thus providing us with a bound on the finite key-rate of the system under this attack model.  We develop a novel proof technique for this scenario, taking advantage of a sampling-based framework introduced by Bouman and Fehr in \cite{bouman2010sampling}, along with proof techniques used for sampling based entropic uncertainty relations \cite{yao2022quantum,krawec2019quantum}. We restrict our attention to noisy but lossless channels as this already presents a large challenge; despite this, we suspect our proof techniques may be extended to deal with lossy channels also, potentially using decoy state methods \cite{QKD-first-decoy-1,QKD-first-decoy-2,QKD-first-decoy-3}.  However, we do not assume uniform noise in the network; some honest links may be noisier than others.  We also do not make any assumptions on the adversary's attack within the corrupted sub-network; i.e., it can be any arbitrary general/coherent attack.

We make several contributions in this work.  First, we show a quantum min entropy bound for the setting where natural and adversarial noise are mixed in a quantum repeater chain.  Such a bound allows us to derive key-rate expressions in the practical finite key setting.  To prove our new bound, we develop several new techniques which may be broadly applicable outside of this application domain.  At a high level, our new results claims that the quantum min entropy, denoted $\Hmin^\epsilon(A|E)$ is bounded by:
\begin{equation}
  \Hmin^\epsilon(A|E) \ge n(1 - h(Q-Q_{\net}+\delta)),
\end{equation}
where $h(x)$ is the binary Shannon entropy, $n$ is the number of network rounds used (after sampling), $Q$ is the observed $X$ basis noise, $Q_{\net}$ is a function of the honest network noise (which we assume in this paper that Alice and Bob may at least lower-bound), and $\delta$ results from finite sampling imperfections.  Our proof takes into account all finite sampling artifacts and imprecisions, allowing users to immediately evaluate key-rates and optimize over user parameters.  Our full result is stated in Theorem \ref{thm:main}.

We use our expression to derive finite key-rate expressions, and also asymptotic key-rates.  We evaluate our results in a variety of settings showing that significantly improved key-rates are possible, compared to standard security models consisting of entirely adversarial networks.  While that result is not surprising, showing it rigorously is a challenging, but important, problem which we solve in this work.

Finally, our proof method may be broadly applicable to other application domains within quantum cryptography.  We build on the quantum sampling framework of Bouman and Fehr and introduce new methods to derive min entropy expressions for systems that are only partially under the control of the adversary.

\subsection{Prior Work}\label{sec:prior-work}

We are not the first to consider physical assumptions in the security model of QKD.  Much work has been done, for instance, in assuming Eve is bounded in her storage abilities, either in quantity of storage bits \cite{damgaard2008cryptography}, or quality of storage memory \cite{wehner2008cryptography}.

We are also not the first to consider a security model based on the communication setup.  For instance, several recent papers have considered alternative security models for satellite communication, placing reasonable assumptions on the adversary's capabilities given the channel conditions.  In particular, these references take advantage of the fact that satellite communication requires line of sight and it is infeasible for an adversary to completely control the freespace channel between the satellite transmitter and the ground station.  In \cite{ghalaii2023satellite}, a new ``bypass channel'' model is introduced which models the practical assumption that an adversary can only capture a portion of the transmitted photons while others will bypass the adversary and arrive at the receiver un-attacked.  Ref \cite{vergoossen2019satellite} took this further and argued that attacks against satellite QKD can be detected through classical means, and defined ``photon key distribution'' protocols to improve performance.  Other references \cite{pan2020secret,vazquez2021quantum} have considered security of QKD protocols, particularly freespace ones such as satellite communication, operating over wiretap channels \cite{wyner1975wire}.  These are all assumptions placed on the adversary, based on reasonable practical constraints on any attack against QKD.  In our work we place what we consider reasonable assumptions on an adversary, based on the impracticality of attacking an entire large-scale network simultaneously.  Similar to the work cited above, these assumptions allow for improved performance of the underlying system - though it is up to users of the system to decide if they are comfortable with the assumption.  Indeed, users can always revert back to the standard security model (though, in that case, they can no longer take advantage of the improved performance).

Perhaps the closest work to ours is found in \cite{mertz2013quantum,graifer2023quantum}.  Both of these sources investigated the performance of QKD where some of the observed channel noise is assumed to be honest or natural noise (also called trusted noise in some references), while some is adversarial.  Both references, however, only considered point-to-point BB84, not a quantum repeater chain.  Furthermore, \cite{mertz2013quantum} only considered collective attack scenarios and, thus, computed a bound on the von Neumann entropy (note that such an analysis could be promoted to general attacks, though the result is usually not as tight, \emph{in the finite key setting}, as deriving a bound directly on the min entropy as we do in this paper).  We do not assume the adversary is restricted to collective attacks, thus requiring us to derive a bound on the quantum min-entropy, a more challenging prospect.  The second, ref. \cite{graifer2023quantum}, considered a particular ``state replacement'' noise model, where the natural noise in the channel consisted of a state being replaced with a truly mixed state.  This replacement is done for every single state sent, in an i.i.d. manner and the probability of state replacement is known and characterized.  Our work considers repeater chains where some of the repeater chain is considered ``honest'' or safe, but suffers from characterizable noise, while the remainder of the network is considered adversarial.   Our proof must take into account the action of the honest repeater network and the classical messages being passed, which was not a requirement in these previous works.

Other references have considered various ``trusted noise'' scenarios in the discrete variable case.  In \cite{shadman2009optimal}, the six-state BB84 protocol was analyzed where the signal received by Bob is mixed with white noise (which can be added deliberately by the source, Alice, or naturally, such as by natural light interfering with a free-space satellite QKD link).  However, only individual attacks were considered in the asymptotic setting; note that individual attacks are weaker than collective attacks and security against individual attacks does not necessarily imply security against arbitrary, general attacks.  In \cite{mafu2022security}, the authors investigated the performance of BB84 with a particular form of added natural noise, namely collective-rotation noise.  However, the security analysis was only against a particular intercept/resend attack strategy, where the adversary measures incoming signals in either the $Z$ or $X$ basis.  This was followed up recently in \cite{garapo2016intercept} for the six-state BB84 protocol, but again, only for intercept resend attacks.  In \cite{woodhead2014tight}, the benefits of adding noise to an already faulty source were considered and shown to improve BB84.  Finally, in \cite{jung2009attack}, the effects of multiple but independent, adversaries on a single point-to-point BB84 link were considered.  Only the von Neumann entropy was investigated there.

BB84 style protocols were not the only ones to be considered in the trusted-noise scenario.  In \cite{sharma2018decoherence,utagi2020ping}, the so-called Ping-Pong protocol (introduced in \cite{QKD-TwoWay-PingPong-SDC}) was analyzed assuming there was either trusted noise in the channel \cite{sharma2018decoherence} or there was noise added by the source \cite{utagi2020ping}. The Ping-Pong protocol relies on a two-way quantum communication channel, with qubits traveling from Alice, to Bob, then back to Alice.  In both these works, only asymptotic analyses were considered and, thus, bounds on von Neumann entropy.  Furthermore, no quantum repeaters were considered.  Finally, larger scale networks were also considered in \cite{le2007stochastic}, though, there, the network consisted only of trusted nodes (not repeaters) and the security model assumed that trusted nodes were corrupted randomly; the goal of that reference was to route QKD paths randomly so that at least one path went through all honest trusted nodes.  This is different from our work where we are forced to use a single path, thus passing through both the honest, and the dishonest, nodes.

Moving beyond these discrete-variable protocols, several sources have investigated natural and trusted noise in the continuous variable QKD scenario \cite{usenko2010feasibility,usenko2010feasibility,pirandola2021composable,liu2022composable,garcia2009continuous}; see also \cite{laudenbach2018continuous} for more of a survey in practical continuous variable QKD.  However, none of these considered repeater chains and, instead, assumed natural noise in the channel between source and receiver, trusted noise in the devices, or the intentional addition of noise at the source or receiver.

\subsection{Preliminaries}\label{section:notation}

We now introduce some notation and basic definitions we use throughout this work.  We will then discuss some more important properties of quantum min entropy and some basic lemmas which will be used later.

Let $\al_d$ be a $d$-character alphabet which, without loss of generality, we simply assume to be $\al_d = \{0, 1, \cdots, d-1\}$.  Given a word $q \in \al_d^N$ and a subset $t \subset \{1, \cdots, N\}$, we write $q_t$ to be the substring of $q$ indexed by subset $t$ (i.e., $q_t = q_{t_1}\cdots q_{t_{|t|}}$) and we write $q_{-t}$ to mean the substring of $q$ indexed by the complement of $t$.  When $t$ is a singleton $t = \{i\}$ we usually just write $q_i$ to mean the $i$-th character of $q$.  We use $w(q)$ to be the relative Hamming weight of $q$, defined by:
\[
  w(q) = \frac{|\{i \st q_i \ne 0\}|}{|q|}.
\]
Finally, given two real values $x, y \in \mathbb{R}$ and $\delta > 0$, then we write:
\begin{equation*}%
x \dc y
\end{equation*}
if and only if $|x-y| \le \delta$.

Let $P$ be some probability distribution over $\al_d$, with $P(x)$ being the probability of some outcome $x\in\al_d$.  Then, given a word $q \in \al_d^N$, for some $N > 1$, we often write $P(q)$ to mean $P(q) = P(q_1)P(q_2)\cdots P(q_N)$.

If a quantum state (density operator) $\rho$ acts on some Hilbert space $\mathcal{H}_A\otimes\mathcal{H}_B$, we usually write $\rho_{AB}$; we then write $\rho_A$ to mean the state resulting from the partial trace over $B$, namely $\rho_A = tr_B\rho_{AB}$.  This can be extended to multiple subspaces.  Given a pure state $\ket{\psi}$, we write $\kb{\psi}$ to denote $\kb{\psi} = \ket{\psi}\bra{\psi}$.  Given an orthonormal basis $\mathcal{B} = \{\ket{x_0}, \cdots, \ket{x_{d-1}}\}$ and a word $i \in \al_d^N$, we write $\ket{i}^\mathcal{B}$ to mean the word $i$ in the $\mathcal{B}$ basis, namely $\ket{i}^\mathcal{B} = \ket{x_{i_1}}\otimes\cdots \otimes\ket{x_{i_N}}$.  If no basis is specified, we assume the standard computational basis, namely $\ket{i} = \ket{i_1}\otimes\cdots\otimes\ket{i_N}$.  Finally, given $\rho$ and $\sigma$, acting on the same Hilbert space, we write $\trd{\rho - \sigma}$ to be the trace distance of $\rho$ and $\sigma$ defined as: $\trd{\rho - \sigma} = tr\sqrt{(\rho-\sigma)^*(\rho-\sigma)}$, where $A^*$ is the Hermitian adjoint of operator $A$.

$ $\newline
\textbf{Bell Basis Notation: }
We use $\ket{\phi_x^y}$, for $x,y\in\{0,1\}$, to denote the Bell basis states:
\begin{equation}
  \ket{\phi_x^y} = \frac{1}{\sqrt{2}}(\ket{0,x} + (-1)^y\ket{1,\bar{x}}),
\end{equation}
where $\bar{x} = 1-x$.  Later, we will work with multiple Bell states tensored together.  For this, we define the \emph{Bell alphabet set of size $N$} to be:
\begin{equation}
  \B^N = \{(x,y) \in \{0,1\}^N\times\{0,1\}^N\}
\end{equation}
Given an element $i = (x,y) \in \B^N$, we write $i^\bit$ to mean the ``$x$'' portion of the string $i$ while we write $i^\phase$ to be the $y$ portion.  That is, the superscript ``$\phase$'' will denote the ``phase'' element of a Bell state, while ``$\bit$'' will represent the ``bit'' portion.  All subset indexing rules discussed earlier apply to each individual portion of $i$ (e.g., $i^\phase_t$ is the $y$ portion of $i$, but only those indices indexed by $t$).  We then write $i_t$ to mean both $x$ and $y$ portions indexed by $t$, namely $i_t = (x_t, y_t) \in \B^{|t|}$. We write $\ket{\phi_i}$ to mean $\ket{\phi_{i^\bit}^{i^\phase}}$ with:
\begin{equation}
  \ket{\phi_i} = \ket{\phi_{i^\bit}^{i^\phase}} = \ket{\phi_{x_1}^{y_1}}\otimes\ket{\phi_{x_2}^{y_2}}\otimes\cdots\otimes\ket{\phi_{x_N}^{y_N}}.
\end{equation}
(Recall $x_j$ is the $j$'th bit of $x$ and similarly for $y$.)

Finally, we also can add two Bell alphabet elements: given $i,j\in\B^N$ with $i = (x,y)$ and $j = (z,u)$, then we write $i+j$ to mean the addition, coordinate-wise, modulo two, namely: $i+j = (x\oplus z, y\oplus u)$, where the strings $x\oplus z$ and $y\oplus u$ are added bit-wise modulo two.

\subsection{Quantum Min Entropy}
Let $X$ be a random variable taking value $i$ with probability $p_i$.  Then we write $H(X)$ to mean the Shannon entropy of $X$ defined to be $H(X) = -\sum_ip_i\log p_i$ where all logarithms in this paper are base two unless otherwise specified.  If $X$ has only two outcomes, then $H(X) = h(p) = -p\log p - (1-p)\log(1-p)$ where $h(p)$ is the binary entropy function.  For technical reasons later, we define a function $\bar{h}(p)$ by $\bar{h}(p) = h(p)$ if $p < 1/2$ and $\bar{h}(p) = 1$ otherwise.  Thus $\bar{h}(p) \le \bar{h}(p')$ for every $0\le p \le p' \le 1$.

Given a quantum state $\rho_{AE}$, the \emph{conditional quantum min entropy} is defined as \cite{renner2008security}:
\begin{equation}
  \Hmin(A|E)_\rho = \sup_{\sigma_E}\max\left\{\lambda\in\mathbb{R} \st 2^{-\lambda}I_A \otimes \sigma_E - \rho_{AE} \ge 0\right\},
\end{equation}
where the supremum is over all density operators $\sigma_E$ acting on $\mathcal{H}_E$ and where $A \ge 0$ means operator $A$ is positive semi-definite.  Let $\Gamma_\epsilon(\rho) = \{\tau_{AE} \st \trd{\rho_{AE} - \tau_{AE}} \le \epsilon\}$, i.e., the set of all density operators $\epsilon$-close to $\rho_{AE}$ in trace distance.  Then, the \emph{smooth min entropy} is defined \cite{renner2008security} to be:
\begin{equation}
  \Hmin^\epsilon(A|E)_\rho = \sup_{\tau \in \Gamma_\epsilon(\rho)}\Hmin(A|E)_\tau.
\end{equation}

Quantum min entropy is a vital resource in quantum cryptography as it directly relates to how many uniform random bits may be extracted from a quantum state.  Formally, let $\rho_{AE}$ be a \emph{classical quantum} state (cq-state).  That is, it may be written in the form $\rho_{AE} = \sum_{a\in\{0,1\}^N}P_A(a)\kb{a}_A \otimes \rho_E^{(a)}$.  Then, if one chooses a two-universal hash function at random, $f:\{0,1\}^N \rightarrow \{0,1\}^\ell$, disclosing the choice of function to Eve, and hashing  the $A$ register to $f(A)$, then it holds \cite{renner2008security}:
\begin{equation}\label{eq:PA}
  \trd{\rho_{f(A),EF} - I/2^\ell\otimes\rho_{EF}} \le 2^{-\frac{1}{2}(\Hmin^\epsilon(A|E)_\rho - \ell)}+2\epsilon.
\end{equation}
Above, $F$ is the system representing Alice's random choice of hash function $f$, while $f(A)$ is the $\ell$-bit register resulting from hashing $N$-bit register $A$.  Essentially, the above states that, so long as the min entropy in the state $\rho_{AE}$ \emph{before} privacy amplification is high enough, one can extract a random string, of size $\ell$-bits, that is uniform random and also independent of Eve.

There are several important properties of min entropy that will be useful later.  Given a state $\rho_{ABC}$ that is classical in $C$, namely it can be written in the form $\rho_{ABC} = \sum_cp_c\kb{c}\otimes\rho_{AB}^{(c)}$, then it holds that:
\begin{equation}\label{eq:min-entropy-mixed}
  \Hmin(A|B)_\rho \ge \Hmin(A|BC)_\rho \ge \min_c \Hmin(A|B)_{\rho^{(c)}}.
\end{equation}
In particular, the above says that for some mixed state $\rho_{AB} = \sum_cp_c\rho_{AB}^{(c)}$, or a state $\rho_{ABC}$ that is classical in $C$, the min entropy of the entire state can be lower-bounded by the worst-case entropy over the possible sub-events $c$.

We conclude this section with two lemmas that will be useful later.  The first one, below, allows us to bound the smooth min entropy in a particular state after a measurement operation is performed on part of it, if we know the min entropy of a state that is ``close'' to it in trace distance:
\begin{lemma} \label{lemma:cptp-entropy}
  (From \cite{krawec2022security}): Let $\epsilon > 0$, and let $\rho$ and $\sigma$ be quantum states acting on the same Hilbert space such that $\frac{1}{2}\trd{\rho-\sigma}\le\epsilon$.  Let $\mathcal{F}$ be some completely positive trace preserving (CPTP) map (i.e., some quantum operation, or operations) which, on input a quantum state $\tau$, acts as follows:
  \begin{equation}
    \mathcal{F}(\tau) = \sum_xp(x|\tau)\kb{x}_X\otimes\tau_{AE}^{(x)}.
  \end{equation}
  Then, it holds that:
  \begin{equation}
    Pr\left(\Hmin^{4\epsilon + 2\epsilon^{1/3}}(A|E)_{\rho^{(x)}} \ge \Hmin(A|E)_{\sigma^{(x)}}\right) \ge 1 - 2\epsilon^{1/3},
  \end{equation}
  where the probability is over the random outcome $X$ in the states $\mathcal{F}(\rho)$ and $\mathcal{F}(\sigma)$.
\end{lemma}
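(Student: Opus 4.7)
The plan is to reduce the claim to a classical-quantum distance argument. Because $\mathcal{F}$ is CPTP, trace distance cannot increase, so the post-map states $\mathcal{F}(\rho)$ and $\mathcal{F}(\sigma)$ remain $\epsilon$-close; because they are classical on $X$, that closeness decomposes outcome-by-outcome, forcing $\rho^{(x)}_{AE}$ and $\sigma^{(x)}_{AE}$ to be close in trace distance for ``most'' values of $x$. Once the two conditional states are close in trace distance, $\sigma^{(x)}$ lies in the smoothing ball of $\rho^{(x)}$ and the definition of $\Hmin^\epsilon$ directly yields the claimed entropy inequality.

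Concretely, the argument proceeds in three steps. First, monotonicity of trace distance under the CPTP map $\mathcal{F}$ gives $\tfrac{1}{2}\trd{\mathcal{F}(\rho)-\mathcal{F}(\sigma)}\le\epsilon$, and the block-diagonal structure in $X$ produces
\begin{equation*}
  \sum_x \Delta_x \le \epsilon,\qquad \Delta_x := \tfrac{1}{2}\trd{p(x|\rho)\rho^{(x)}_{AE} - p(x|\sigma)\sigma^{(x)}_{AE}}.
\end{equation*}
Second, I would convert $\Delta_x$ into a bound on the normalized conditional states. Using the identity
\begin{equation*}
  p(x|\rho)\bigl(\rho^{(x)}-\sigma^{(x)}\bigr) = \bigl(p(x|\rho)\rho^{(x)}-p(x|\sigma)\sigma^{(x)}\bigr) + \bigl(p(x|\sigma)-p(x|\rho)\bigr)\sigma^{(x)},
\end{equation*}
together with $|p(x|\rho)-p(x|\sigma)|\le 2\Delta_x$ (obtained by tracing the argument of $\Delta_x$), the triangle inequality gives $\tfrac{1}{2}\trd{\rho^{(x)}-\sigma^{(x)}} \le 2\Delta_x/p(x|\rho)$. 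Third, I would apply Markov's inequality to the non-negative random variable $Y(x):=2\Delta_x/p(x|\rho)$ under the distribution $p(x|\rho)$: since $\mathbb{E}[Y] = 2\sum_x\Delta_x \le 2\epsilon$, the probability that $Y$ exceeds the target threshold $4\epsilon + 2\epsilon^{1/3}$ is at most $2\epsilon/(4\epsilon+2\epsilon^{1/3}) \le 2\epsilon^{1/3}$ for all $\epsilon \in (0,1]$. For every $x$ where $Y(x)$ stays below the threshold, $\sigma^{(x)}_{AE}\in\Gamma_{4\epsilon+2\epsilon^{1/3}}(\rho^{(x)}_{AE})$, so by the definition of smooth min entropy, $\Hmin^{4\epsilon+2\epsilon^{1/3}}(A|E)_{\rho^{(x)}} \ge \Hmin(A|E)_{\sigma^{(x)}}$.

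The main obstacle I anticipate is the bookkeeping in the second step: one must correctly account for both the ``state-gap'' contribution $\Delta_x$ and the ``probability-gap'' contribution $|p(x|\rho)-p(x|\sigma)|$, which together inflate the conditional bound from $\Delta_x/p(x|\rho)$ to $2\Delta_x/p(x|\rho)$ and thereby produce the factor of $4$ in front of $\epsilon$ in the final smoothing parameter. A secondary subtlety is that the statement refers to outcomes ``in the states $\mathcal{F}(\rho)$ and $\mathcal{F}(\sigma)$''; since the total variation between the two output distributions is at most $\epsilon$, the failure probabilities under the two measures differ by at most $\epsilon$, which is comfortably absorbed into the $2\epsilon^{1/3}$ bound. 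Everything else—CPTP contractivity, the cq decomposition, and the final appeal to the smoothing ball—is standard once these two steps are in place.
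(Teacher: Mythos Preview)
The paper does not supply its own proof of this lemma; it is quoted verbatim from \cite{krawec2022security} and used as a black box. So there is nothing in the present paper to compare your argument against. That said, your proposal is the standard route and is correct: CPTP contractivity followed by the block-diagonal decomposition gives $\sum_x\Delta_x\le\epsilon$; the triangle-inequality step converting $\Delta_x$ to a bound on the normalized conditional trace distance is sound (your bookkeeping of the probability-gap term is exactly what is needed); and Markov's inequality under $p(\cdot|\rho)$ then yields the high-probability event with room to spare---indeed your bound $2\epsilon/(4\epsilon+2\epsilon^{1/3})\le\epsilon^{2/3}$ is strictly tighter than the stated $2\epsilon^{1/3}$ for $\epsilon\le 1$. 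Your closing remark about switching between $p(\cdot|\rho)$ and $p(\cdot|\sigma)$ at cost $\epsilon$ in total variation is also the right way to handle the ambiguity in the phrase ``in the states $\mathcal{F}(\rho)$ and $\mathcal{F}(\sigma)$.''
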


The next lemma that we will need in our proof, allows us to bound the min entropy of a superposition state in the Bell basis, if a measurement is made on the first qubit of every Bell pair:
\begin{lemma}\label{lemma:bell-entropy}
  (From \cite{krawec2023entropic}, rewritten using our notation): Let $\ket{\psi}_{XE} = \sum_{i\in J}\alpha_i\ket{\phi_{i^\bit}^{i^\phase}}_X\otimes\ket{E_i}_E$ where $J = \{i\in \B^n \st w(i^\phase) \le Q\} \subset \B^n$ (for $Q \in [0,1]$).  Let $\rho_{AE}$ be the state resulting from taking $\ket{\psi}$ and measuring the first particle of every Bell pair in register $X$ in the computational basis (which results in register $A$) while the second particle of every Bell pair is traced out.  This measurement results in post-measured state $\rho_{AE}$.  Then, it holds that:
  \begin{equation}
    \Hmin(A|E)_\rho \ge n\left(1 - \bar{h}(Q)\right)
  \end{equation}
\end{lemma}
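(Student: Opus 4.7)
The plan is to bound the min-entropy directly via the guessing probability, exploiting the identity $\Hmin(A|E)_\rho = -\log P_{\mathrm{guess}}(A|E)_\rho$; the target is $P_{\mathrm{guess}} \le 2^{-n(1-\bar{h}(Q))}$.

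First, I would expand each Bell state in the computational basis on Alice's qubit using $\ket{\phi_x^y} = \frac{1}{\sqrt{2}}\sum_a(-1)^{ay}\ket{a}_A\ket{a\oplus x}_{A'}$, where $A'$ denotes the second qubit of each Bell pair (to be traced out). Substituting gives $\ket{\psi}_{AA'E} = \frac{1}{\sqrt{2^n}}\sum_a\ket{a}_A\otimes\ket{\chi_a}_{A'E}$ with $\ket{\chi_a} = \sum_{i\in J}\alpha_i(-1)^{a\cdot i^\phase}\ket{a\oplus i^\bit}_{A'}\ket{E_i}_E$. Performing Alice's computational-basis measurement and tracing out $A'$ yields $\rho_{AE} = \sum_a\ket{a}\bra{a}_A\otimes\rho_E^{(a)}$ with
\begin{equation*}
\rho_E^{(a)} = \frac{1}{2^n}\sum_{\substack{i,j\in J\\ i^\bit = j^\bit}}\alpha_i\alpha_j^*(-1)^{a\cdot(i^\phase+j^\phase)}\ket{E_i}\bra{E_j},
\end{equation*}
the constraint $i^\bit = j^\bit$ arising from $\langle a\oplus j^\bit|a\oplus i^\bit\rangle = \delta_{i^\bit, j^\bit}$ in the partial trace.

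Next I would bound $P_{\mathrm{guess}} = \sup_{\{M_a\}}\sum_a \mathrm{Tr}[M_a\rho_E^{(a)}]$, the supremum over POVMs on $E$. Introducing the Fourier transform $\tilde{M}_k = \sum_a(-1)^{a\cdot k}M_a$ collapses the outer $a$-sum, giving $P_{\mathrm{guess}} = \sup\frac{1}{2^n}\sum_{i^\bit = j^\bit}\alpha_i\alpha_j^*\bra{E_j}\tilde{M}_{i^\phase + j^\phase}\ket{E_i}$. The key inequality $\|\tilde{M}_k\|_\infty\le 1$ for every $k$ follows from positivity of POVM elements: for any unit vector $\ket{v}$, $|\sum_a(-1)^{a\cdot k}\bra{v}M_a\ket{v}|\le\sum_a\bra{v}M_a\ket{v}=1$. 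Applying the triangle inequality and reindexing $(i^\bit, i^\phase)$ as $(x,y)$ yields $P_{\mathrm{guess}} \le \frac{1}{2^n}\sum_x\bigl(\sum_{y : w(y)\le Q}|\alpha_{(x,y)}|\bigr)^2$.

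Finally, Cauchy-Schwarz gives $\bigl(\sum_{y}|\alpha_{(x,y)}|\bigr)^2 \le N_Q\sum_y|\alpha_{(x,y)}|^2$, where $N_Q = |\{y\in\{0,1\}^n : w(y)\le Q\}|$ is bounded by $2^{n\bar h(Q)}$ via a standard binomial-tail estimate (with the $Q\ge 1/2$ case absorbed into the definition of $\bar h$). Summing over $x$ with $\sum_i|\alpha_i|^2=1$ then gives $P_{\mathrm{guess}}\le N_Q/2^n\le 2^{-n(1-\bar h(Q))}$, and the claimed min-entropy bound follows. The most subtle step will be the Fourier rearrangement together with the operator-norm bound $\|\tilde{M}_k\|_\infty\le 1$, since this is what turns a seemingly-complicated supremum over POVMs into an elementary counting problem on the coefficient vector $(\alpha_i)_{i\in J}$; the remaining arguments reduce to standard bookkeeping.
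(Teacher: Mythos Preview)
Your argument is correct. The expansion of the Bell states, the form of $\rho_E^{(a)}$ after tracing out $A'$, the Fourier rearrangement of the POVM, the bound $\|\tilde M_k\|_\infty\le 1$ (valid because $\tilde M_k$ is Hermitian and the numerical range is controlled by $\sum_a M_a=I$), and the final Cauchy--Schwarz plus binomial-tail estimate all check out. One small point you leave implicit is that the $\ket{E_i}$ are unit vectors, which is needed for $|\bra{E_j}\tilde M_k\ket{E_i}|\le 1$; this is harmless since any norm can be absorbed into $\alpha_i$.

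As for comparison: the paper does not actually prove Lemma~\ref{lemma:bell-entropy}. It is quoted from \cite{krawec2023entropic} and used as a black box, so there is no in-paper proof to set yours against. The argument in that reference proceeds in the entropic-uncertainty style typical of the sampling framework (working through superposition support sizes and the Hadamard duality between bit and phase errors), whereas you go straight through the operational guessing-probability characterisation of $\Hmin$. Your route is arguably more transparent here: the Fourier trick on the POVM turns the optimisation into a pure counting problem on $(\alpha_i)_{i\in J}$ in one stroke, and the dependence on $Q$ enters only through the cardinality $N_Q$ of the phase-error ball. The entropic-uncertainty approach has the advantage of slotting into a broader machinery that handles biased measurements and more general constraint sets, but for this specific statement your direct computation is both shorter and self-contained.
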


\subsection{Quantum Sampling}

To derive a lower-bound on the quantum min entropy, we will take advantage of a quantum sampling framework introduced by Bouman and Fehr in \cite{bouman2010sampling}.  In this section, we review some of the main points of this framework, referring the reader to \cite{bouman2010sampling} for additional details.  The main point of Bouman and Fehr's sampling framework is the ability to promote a classical sampling strategy to a quantum one in such a way that one can argue about the state of the post measured system after sampling a quantum state.

A \emph{classical sampling strategy} for words of length $N$ over some alphabet $\al_d$ is a triple $(P_T, g, r)$ where $P_T$ is a probability distribution over all subsets of $\{1, \cdots, N\}$; $g$ is a \emph{guess function}; and $r$ is a \emph{target function}.  Here, $g, r:\al_d^* \rightarrow \mathbb{R}$.  Typically $g \equiv r$ (which will certainly be the case in this work, where we set $g \equiv r \equiv w$, the Hamming weight function), though this is not required in general.  Given a word $q \in \al_d^N$, the sampling strategy will: (1) choose a subset $t$ according to distribution $P_T$; (2) observe $q_t$ and evaluate $g(q_t)$ (or, will simply observe $g(q_t)$); finally, (3) output this value as a ``guess'' as to the value of $r(q_{-t})$.  That is, the strategy uses $g(q_t)$ (the guess function evaluated on the observed portion of $q$) to guess at the value of some target function evaluated on the unobserved portion of $q$.

Let $\delta >0$, then given a fixed subset $t$, we define the set of \emph{ideal words} to be those words $q$ where the guess, given $q_t$, is always $\delta$-close to the target $r(q_{-t})$.  Formally:
\begin{equation}
  \mathcal{G}_t = \left\{ q \in \al_d^N \st g(q_t) \dc r(q_{-t})\right\}.
\end{equation}
(Recall $x\dc y$ only if $|x-y| \le \delta$.)  From this, the \emph{error probability} of the given sampling strategy is defined to be:
\begin{equation}
  \epsilon^{cl} = \max_{q\in\al_d^N}Pr\left(q\not\in \mathcal{G}_t\right),
\end{equation}
where the probability, above, is over the choice of subset $t$ chosen according to $P_T$.  Thus, given any string $q\in\al_d^N$, the probability that the given classical sampling strategy fails to produce a $\delta$-close guess of the target value on observing $q_t$ is no higher than $\epsilon^{cl}$.

While the above described sampling strategy applies to a classical word, it may be promoted, in a natural way, to a quantum sampling strategy.  Given a quantum state $\ket{\psi}_{AE}$, where the $A$ register lives in some $d^N$ dimensional Hilbert space, the sampling strategy will choose a subset $t$ according to $P_T$ and then \emph{measure} those qudits in $A$ indexed by $t$ in some fixed $d$-dimensional basis $\mathcal{B}$.  This measurement produces a classical output $q_t\in\al_d^{|t|}$ and a quantum post-measured state $\ket{\psi^{t,q_t}}_{A'E}$, which depends on both the subset choice $t$ and the actual observed value $q_t$.  Note that the $A'$ register of the post-measured state, once removing the measured qudits, lives in a $d^{N-|t|}$-dimensional Hilbert space.  The question becomes: what can be said of $\ket{\psi^{t,q_t}}_{A'E}$?

Define the space of \emph{ideal states} for subset $t$ with respect to the fixed (but arbitrary) $d$-dimensional basis $\mathcal{B}$ as follows:
\begin{equation}
  \text{span}\left(\mathcal{G}_t\right)\otimes\mathcal{H}_E = \text{span}\left\{\ket{q}^{\mathcal{B}} \st q \in \mathcal{G}_t\right\}\otimes\mathcal{H}_E.
\end{equation}
An \emph{ideal state}, $\ket{\ideal^t}$, is defined to be one which lives in this space.  Note that, if a basis measurement in the $\mathcal{B}$ basis is made of $\ket{\ideal^t}$ in subset $t$, producing outcome $x\in\al_d^{|t|}$, then it is guaranteed that the post-measured state can be written in the form:
\begin{equation}
  \ket{\ideal^t_x} = \sum_{i \in J_x}\alpha_i\ket{i}^{\mathcal{B}}\ket{E_i},
\end{equation}
where:
\begin{equation}
  J_x = \left\{i \in \al_d^{N-|t|} \st g(x) \dc r(i)\right\}
\end{equation}
Notice that, if the state given is an ideal state with respect to subset $t$ and if the sampling strategy actually chooses $t$ to sample, then the post-measured state is well behaved.  Of course, given an arbitrary state $\ket{\psi}_{AE}$, this is not guaranteed.  However, Bouman and Fehr's main result, stated in Theorem \ref{thm:sample} below, says that, roughly, $\ket{\psi}_{AE}$ should behave like an ideal state, on average over the subset choice.

\begin{theorem}\label{thm:sample}
  (From \cite{bouman2010sampling}, though reworded here for our application): Let $\delta > 0$ and $\ket{\psi}_{AE}$ be an arbitrary quantum state where the $A$ register consists of $N$ qudits each of dimension $d$.  Let $\mathcal{B}$ be an arbitrary $d$-dimensional orthonormal basis.  Then, given a classical sampling strategy $(P_T, g, r)$ with failure probability $\epsilon^{cl}$, there exists a collection of ideal states $\{\ket{\ideal^t}\}_t$, indexed over every subset $t$, such that $\ket{\ideal^t} \in \text{span}\left(\mathcal{G}_t\right)\otimes\mathcal{H}_E$ (where $\text{span}\left(\mathcal{G}_t\right)$ is defined with respect to basis $\mathcal{B}$) and:
  \begin{equation}
    \frac{1}{2}\trd{\sum_tP_T(t)\kb{t}\otimes\kb{\psi} - \sum_tP_T(t)\kb{t}\otimes\kb{\ideal^t}} \le \sqrt{\epsilon^{cl}}.
  \end{equation}
\end{theorem}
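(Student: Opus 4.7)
The plan is to construct each ideal state $\ket{\ideal^t}$ as the normalized projection of $\ket{\psi}$ onto the ideal subspace for subset $t$, obtain a pointwise trace-distance bound via the standard pure-state fidelity identity, and then average over $t$ using Jensen's inequality. First, I expand in the $\mathcal{B}$ basis on register $A$,
\[
\ket{\psi}_{AE} = \sum_{q\in\al_d^N}\ket{q}^{\mathcal{B}}\otimes\ket{E_q},
\]
with unnormalized vectors $\ket{E_q} \in \mathcal{H}_E$ satisfying $\sum_q\braket{E_q|E_q}=1$. For each $t$, let $\Pi_t$ be the orthogonal projector onto $\text{span}(\mathcal{G}_t)\otimes\mathcal{H}_E$, and define $\ket{\ideal^t} = \Pi_t\ket{\psi}/\|\Pi_t\ket{\psi}\|$ (or any fixed vector in the ideal subspace if the denominator vanishes). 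By construction $\ket{\ideal^t}$ lies in $\text{span}(\mathcal{G}_t)\otimes\mathcal{H}_E$ as required.

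For each fixed $t$, I would use the identity $\braket{\psi|\ideal^t}=\|\Pi_t\ket{\psi}\|$ together with the standard pure-state trace-distance formula to conclude
\[
\tfrac{1}{2}\trd{\kb{\psi}-\kb{\ideal^t}} = \sqrt{1-|\braket{\psi|\ideal^t}|^2} = \|(I-\Pi_t)\ket{\psi}\|,
\]
and direct expansion in the $\mathcal{B}$ basis gives $\|(I-\Pi_t)\ket{\psi}\|^2 = \sum_{q\notin\mathcal{G}_t}\braket{E_q|E_q}$. Averaging this squared norm over $t\sim P_T$ and swapping the order of summation is where the classical sampling guarantee enters the argument:
\[
\sum_t P_T(t)\,\|(I-\Pi_t)\ket{\psi}\|^2 = \sum_{q\in\al_d^N}\braket{E_q|E_q}\cdot\Pr_{t\sim P_T}\bigl(q\notin\mathcal{G}_t\bigr) \le \epsilon^{cl},
\]
since the inner probability is at most $\epsilon^{cl}$ for every fixed $q$ and the weights $\braket{E_q|E_q}$ sum to one.

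Finally, because the basis vectors $\{\ket{t}\}_t$ are orthonormal, the trace norm of the classical-quantum difference decomposes as a weighted sum, yielding $\tfrac{1}{2}\trd{\sum_tP_T(t)\kb{t}\otimes(\kb{\psi}-\kb{\ideal^t})} = \sum_t P_T(t)\,\|(I-\Pi_t)\ket{\psi}\|$. One application of Jensen's inequality (concavity of $\sqrt{\cdot}$) then gives
\[
\sum_t P_T(t)\,\|(I-\Pi_t)\ket{\psi}\| \le \sqrt{\sum_t P_T(t)\,\|(I-\Pi_t)\ket{\psi}\|^2} \le \sqrt{\epsilon^{cl}},
\]
which is the claimed bound. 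The main subtlety I expect is the Jensen step itself: it is precisely this square-root averaging, together with the pure-state fidelity-to-trace-distance conversion, that forces a classical failure probability $\epsilon^{cl}$ to translate into a quantum bound of order $\sqrt{\epsilon^{cl}}$ rather than $\epsilon^{cl}$, and this quadratic loss appears to be intrinsic to this style of sampling argument rather than an artifact of the construction.
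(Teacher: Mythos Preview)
Your proposal is correct and is essentially the standard Bouman--Fehr argument: project $\ket{\psi}$ onto $\text{span}(\mathcal{G}_t)\otimes\mathcal{H}_E$, convert the overlap to trace distance via the pure-state identity, identify the averaged squared defect with the classical failure probability, and close with Jensen. The paper itself does not reproduce a proof but simply cites \cite{bouman2010sampling} (and \cite{yao2022quantum} for the present phrasing); your write-up is precisely the argument contained in those references, so there is nothing to compare.
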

\begin{proof}
  For a proof, see \cite{bouman2010sampling}; to see that our rewording of their main result follows from Bouman and Fehr's work, the reader is also referred to \cite{yao2022quantum}.
\end{proof}

To conclude this section, we will introduce the classical sampling strategy we will use later which we denote here by $\Psi_4$.  It operates on the four-dimensional alphabet $\B^N$ and is defined as follows: $P_T$ will choose a random subset $t$ of size $m\le N/2$, uniformly at random from all subsets of $\{1, \cdots, N\}$ of size $m$.  Then, the guess function and target functions are the Hamming weight of the phase component of the word $q \in \B^N$.  Namely $g(q_t) = w(q^\phase_t)$ and $r(q_{-t}) = w(q^\phase_{-t})$.

To bound the error probability of this strategy, we will actually need to introduce an alternative strategy defined and analyzed in \cite{bouman2010sampling}, for two character alphabets which we denote $\Psi_{hw}$.  Namely, given a word $q \in \{0,1\}^N$, $P_T$ will choose a uniform random subset of size $m\le N/2$, observe the relative Hamming weight of $q_t$, namely $g(q_t) = w(q_t)$ and use this as a guess for the target value $r(q_{-t}) = w(q_{-t})$ (i.e., the target value is the Hamming weight of the unobserved portion).  It was shown in \cite{bouman2010sampling} that the error probability for this strategy is upper-bounded by $\epsilon_{hw}^{cl}$ defined to be:
\begin{equation}\label{eq:hw-sample}
  \epsilon_{hw}^{cl} \le 2\exp\left(-\delta^2\frac{mN}{N+2}\right).
\end{equation}

Using this, we can prove the following lemma, bounding the error probability of the sampling strategy $\Psi_4$:
\begin{lemma}\label{lemma:sample-used}
  Given the sampling strategy $\Psi_4$ described above, for $m < N/2$, it holds that:
  \begin{equation}
    \epsilon^{cl} \le 2\exp\left(-\delta^2\frac{mN}{N+2}\right).
  \end{equation}
\end{lemma}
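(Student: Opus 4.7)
The plan is to reduce the four-dimensional sampling strategy $\Psi_4$ directly to the two-dimensional Hamming-weight strategy $\Psi_{hw}$ analyzed in \cite{bouman2010sampling}, and then apply the bound in equation (\ref{eq:hw-sample}).

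First I would observe that $\Psi_4$ and $\Psi_{hw}$ use exactly the same subset distribution $P_T$, namely the uniform distribution on size-$m$ subsets of $\{1,\dots,N\}$. Next, inspect the guess and target functions of $\Psi_4$: both $g(q_t) = w(q^\phase_t)$ and $r(q_{-t}) = w(q^\phase_{-t})$ depend only on the phase component $q^\phase \in \{0,1\}^N$ of the Bell-alphabet word $q = (q^\bit, q^\phase) \in \B^N$, and not on the bit component $q^\bit$. Consequently the ideal set $\mathcal{G}_t$ for $\Psi_4$ satisfies
\begin{equation*}
q = (q^\bit,q^\phase) \in \mathcal{G}_t^{\Psi_4} \iff |w(q^\phase_t) - w(q^\phase_{-t})| \le \delta \iff q^\phase \in \mathcal{G}_t^{\Psi_{hw}},
\end{equation*}
independently of $q^\bit$. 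Thus for every $q \in \B^N$, the probability (over $t$) that $\Psi_4$ fails on $q$ equals the probability that $\Psi_{hw}$ fails on $q^\phase$.

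Taking the maximum over $q \in \B^N$ on the left collapses to a maximum over $q^\phase \in \{0,1\}^N$ on the right, so $\epsilon^{cl}_{\Psi_4} = \epsilon^{cl}_{\Psi_{hw}}$. Plugging in the known bound (\ref{eq:hw-sample}) for $\Psi_{hw}$ finishes the proof. There is essentially no obstacle here; the only thing to double-check is that the subset-sampling distribution and the $\delta$-closeness condition really do match between the two strategies so that the reduction is exact rather than lossy, which the inspection above confirms.
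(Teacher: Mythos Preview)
Your proposal is correct and follows essentially the same approach as the paper's proof: both reduce $\Psi_4$ to $\Psi_{hw}$ by observing that membership in the ideal set $\mathcal{G}_t$ for $\Psi_4$ depends only on the phase component $q^\phase$, so the failure probability of $\Psi_4$ on $q$ equals that of $\Psi_{hw}$ on $q^\phase$, and then invoke the bound (\ref{eq:hw-sample}). Your write-up is slightly more explicit about the subset distributions matching and the maximum collapsing from $\B^N$ to $\{0,1\}^N$, but the argument is the same.
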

\begin{proof}
Let $\mathcal{G}_t^{(4)}$ be the set of good words induced by sampling strategy $\Psi_4$ and let $\mathcal{G}_t^{hw}$ be the set of good words for sampling strategy $\Psi_{hw}$.  Pick $q \in \B^N$ and let $\widetilde{q} = q^\phase$.  Then it is obvious that $q \not\in \mathcal{G}_t^{(4)} \iff \widetilde{q}\not\in \mathcal{G}_t^{hw}$.  Thus, $Pr(q \not\in \mathcal{G}_t^{(4)}) = Pr(\widetilde{q}\not\in\mathcal{G}_t^{hw})$.  Since $q$ was arbitrary, and using Equation \ref{eq:hw-sample}, the result follows.
\end{proof}


\section{Network and Security Model}\label{sec:network-model}

We consider a repeater chain topology in this work consisting of $c$ repeaters, denoted $\repeater_1, \cdots, \repeater_c$, chained in sequence connecting two users Alice and Bob as shown in Fig.~\ref{fig:chain}(a).  A repeater in our network is a basic device with two quantum storage ports, one connected to each neighbor.  These devices are capable of creating Bell pairs and sending one particle to a neighbor while storing the other in 
quantum memory; receiving quantum states and storing them in the corresponding storage port; performing Bell measurements on the two qubits in storage; and finally, sending and receiving classical messages.  We assume a noisy but lossless quantum communication model in this work, leaving the lossy case to future work.

\begin{figure}[t]
   \centerline{\includegraphics[width=3.5in, trim = 0.cm 0.0cm 0.cm 0.1cm, clip]{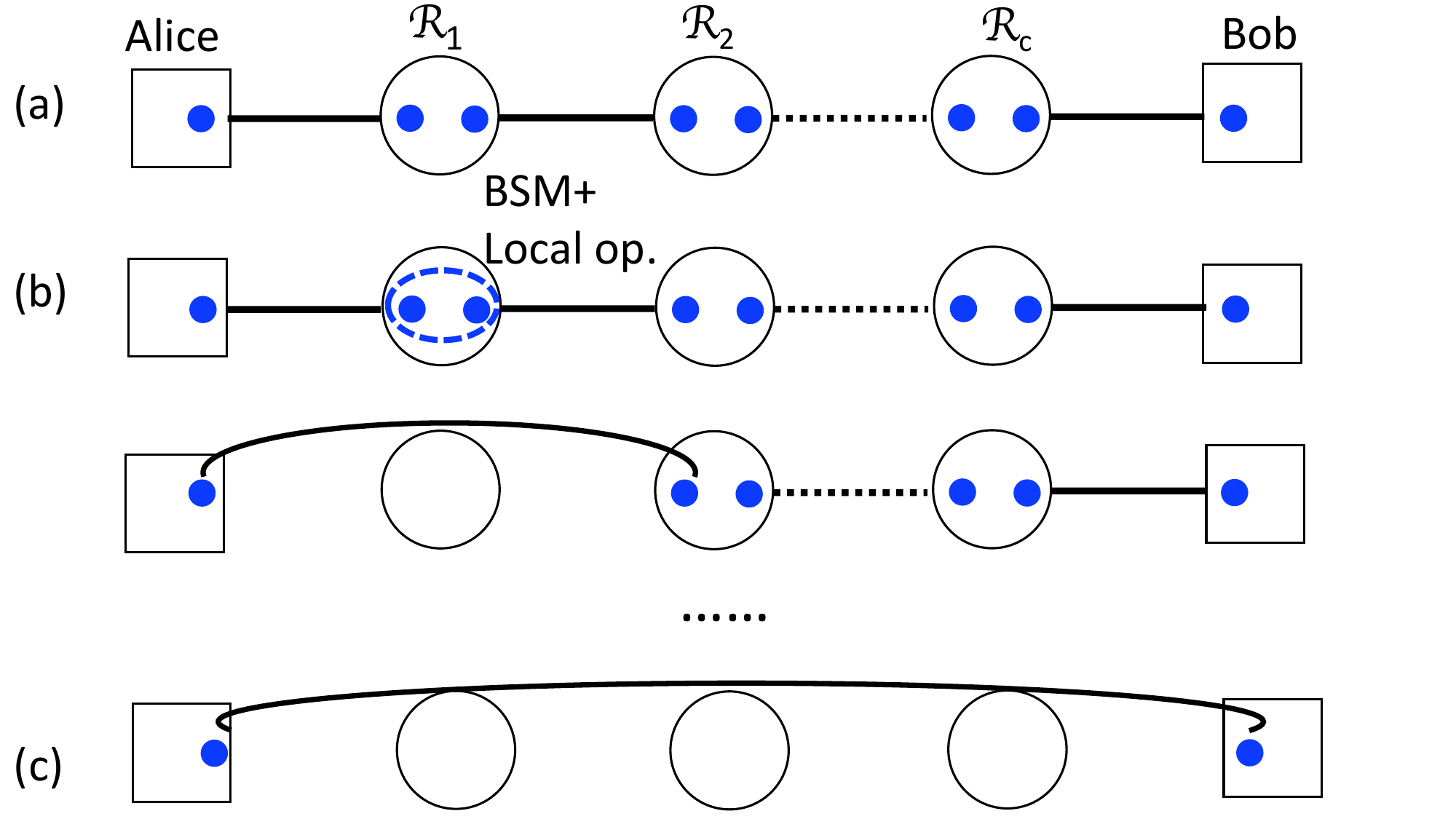}}
\caption{{\small Illustration of the operation of the repeater chain. (a) Link-level distribution of Bell pairs. (b) Bell state measurement (BSM) and local Pauli gate operation at repeater $\repeater_1$ to create an entanglement between Alice and repeater $\repeater_2$. (c) End-to-end entanglement between Alice and Bob. }}
\label{fig:chain}
\end{figure}

If the entire repeater chain is honest, the network will perform the following operations on each round (refer also to Fig.~\ref{fig:chain}):
\begin{enumerate}
\item First, repeater $\repeater_1$ will create two Bell pairs and send one particle to Alice and one particle to $\repeater_2$.  The other two particles (one from each pair) are stored in the corresponding storage ports of $\repeater_1$.
\item Repeater $\repeater_2$ will then store the received particle from $\repeater_1$ while creating a new Bell pair and sending one particle to $\repeater_3$.
\item Repeaters continue to distribute link-level Bell pairs until all repeaters have two particles each while Alice and Bob have one particle each.
\item While the above is happening, repeater $\repeater_1$ will, as soon as possible, perform a Bell measurement on both particles in its memory, thus creating, ideally, an entangled pair between Alice and repeater $\repeater_2$; see Fig. \ref{fig:chain}(b).  Furthermore, the outcome of this measurement (which we denote as simply 
$``x,y"$ if Bell state $\ket{\phi_x^y}$ is observed)
is sent to Alice.
\item Alice will, on receipt of the classical message from $\repeater_1$,  apply an appropriate Pauli gate to her particle in the right storage port (received from $\repeater_1$ and which, should ideally now, be entangled with $\repeater_2$).  This should, in the absence of noise, ensure that Alice and $\repeater_2$ have the Bell pair $\ket{\phi_0^0}$.

  \item As soon as $\repeater_2$ has two particles in its storage port (namely, as soon as it sends a particle to $\repeater_3$), it will perform a Bell measurement itself, reporting the outcome to Alice, who applies the correct Pauli gate as before.
  \item The above continues until, finally, the last repeater 
  $\repeater_c$ performs a Bell measurement, reporting the outcome and Alice will perform the correct Pauli operation.  In the noise free scenario, Alice and Bob should now share the state 
  $\ket{\phi_0^0}$, independent of the repeaters; see Fig. \ref{fig:chain}(c).
\end{enumerate}

The above describes the operations of the network, the goal of which is to establish end-to-end entanglement between Alice and Bob.  Of course, the ultimate goal of the users is to establish a shared secret key.  For this, Alice and Bob will run the entanglement based E91 protocol \cite{ekert1991quantum}.  We actually consider the more commonly used, biased version, of this protocol, where the $Z$ basis is used for key distillation and the $X$ basis is used only for testing the error rate in the channel \cite{QKD-BB84-Modification}.  In detail, Alice and Bob will perform the following operations:
\begin{enumerate}
\item Alice and Bob use the repeater chain network for $N$ rounds, each round operating as described above.  Ideally, this should result in $N$ shared Bell states held between the two users, each of the form $\ket{\phi_0^0}$.
\item Alice and Bob will choose a subset $t \subset \{1, 2, \cdots, N\}$ of size $|t| = m \le N/2$ and measure those qubits indexed by $t$ in the $X$ basis.  This results in outcomes $q_A$ (for Alice) and $q_B$ (for Bob).  They broadcast their measurement results and compute the total $X$ basis error string as $q = q_A\oplus q_B$.  This should, ideally, be the all zero string if there is no error in the network.
\item The remaining $n = N-m$ qubits held by Alice and Bob are measured in the $Z$ basis.  This will be used as their raw-key.
\item Finally, Alice and Bob run an error correcting protocol on their raw keys and a privacy amplification protocol to output their final secret key.
\end{enumerate}

\subsection{Adversarial Model and Assumptions}\label{sec:network-assumptions}
Our goal in this paper is to analyze the scenario where an adversary controls a contiguous subset (sub-network) of the repeaters in the chain, while the remaining repeaters behave honestly.  We assume that the corrupted repeaters are contiguously connected and that Eve also controls the fiber lines within this corrupted sub-network (which we call the adversary's \emph{zone of control}).  Any repeater outside the corrupted zone of control behaves honestly, and any fiber connection outside the corrupted region is not under adversarial control, but is noisy (i.e., these links are susceptible to natural noise); see Figure \ref{fig:network}.   This is in contrast to general QKD repeater chain scenarios, where it is assumed that the adversary completely controls all fiber and repeater nodes between Alice and Bob.  

As mentioned in Section \ref{sec:intro}, there are two ways to justify the above assumption.  First, in a large QKD repeater chain, it is unlikely that an adversary can gain physical access to all repeaters in the chain and all fiber links connecting them.  Instead, it is more realistic to assume an adversary can only realistically control a ``small'' subset of those repeaters and that the repeaters controlled by the adversary will be contiguous.  An alternative way to justify the assumption is that, it is likely that some repeaters near end users can be placed in secure areas (e.g., in a trusted corporate or government building).  Thus, one can justify trusting those repeaters, but not the remaining middle section of the network, connecting the two trusted regions.  Note that one does not need to know exactly how many repeaters are adversarial - instead one needs an upper-bound on this; one may just as easily assume that a certain lower-bound of repeaters are trustworthy (but noisy) and then assume the remainder are adversarial.

  Our goal is to show that improved key-rates are possible in this security model.  We will do so by deriving a bound on the finite key-rate under the network and security model derived here.  Before proceeding with our proof in the next section, however, we formally state our security model assumptions and, especially, the attack model afforded to the adversary.  The assumptions we make in our security proof are as follows:
  
  \assumption The adversary can corrupt any number of contiguous repeaters in the chain.  Furthermore, we assume that Eve can also control all fiber links between repeaters in her zone of control and the fiber links connecting to the nearest honest repeater.  We will actually assume that Eve is able to completely replace her corrupted sub-network with her own perfect devices, and perform any quantum attack possible here (i.e., she need not operate within the bounds of a repeater chain and there will be no assumption of natural noise within the corrupted sub-network).  Any repeater and fiber link outside her zone of control, however, cannot be attacked by Eve (though will be noisy).
  \assumption \label{a:cl1} The adversary can read, but not tamper with, classical messages sent by repeaters outside their zone of control.
  \assumption \label{a:cl2} Classical messages are sent after all Bell swaps are performed in the entire network for all rounds.  This can be achieved by having the network wait until all $N$ rounds have been performed, before sending the correction messages for the Bell swaps.
  \assumption \label{a:noise} Though Alice and Bob do not know exactly which sub-network is controlled by Eve, they are able to lower bound the amount of natural noise in the honest sub-network.  In particular, they are able to lower-bound the so-called \emph{noise parameter} of the network, defined in Definition \ref{def:noise-parameter}.

  Out of the above assumptions, Assumption 2 is perhaps the strongest.  We actually don't think it's entirely necessary, however could not formally prove our result without it.  We leave, as interesting future work, the removal of this assumption.  We still feel that, even with the assumption in place, our results are interesting and, furthermore, this assumption is not unreasonable in a large-scale network setting.  There may even be ways to enforce it through repeater messaging logs for instance.

\subsection{Natural Noise Model}\label{sec:naturalnoise}
Outside of the adversary's zone of control are the left and right honest repeater sub-networks.  Though honest, we will assume these are noisy in the sense that fiber noise, and internal repeater noise, may cause errors in the Bell states being distributed.  For our security proof, we will assume the natural noise acts in an i.i.d. manner and leads to a mixed Bell diagonal state.  We do not assume the noise is identical in every fiber link (e.g., some may be ``noisier'' than others).

As before, let $c$ be the total number of repeaters in the chain.  Formally, consider the link between honest repeaters $\repeater_i$ and $\repeater_{i+1}$ (if $i=0$, then we are considering the link between Alice and the first repeater $\repeater_1$, while if $i = c$, then we are considering the link between the last repeater and Bob).  Then, we assume that the two-qubit state distributed between $\repeater_i$ and $\repeater_{i+1}$ is actually of the form:
\begin{equation}\label{eq:natural-noise-single}
\rho = \sum_{x\in\B}P^i(x)\kb{\phi_x}.
\end{equation}
where $P^i(x)$ is the probability that the final shared state will be $\kb{\phi_x}$ for some $x \in B$.  Note the superscript $i$ indexes the repeater number, since we assume different links may have different noise levels.

After $N$ rounds, we can write the state between honest repeaters $i$ and $i+1$ as follows:
\begin{equation}\label{eq:natural-noise}
  \rho = \sum_{x \in \B^N}P^i(x)\kb{\phi_x}.
\end{equation}
Recall, from Section \ref{section:notation}, we define $P^i(x_1,\cdots,x_N) = P^i(x_1)P^i(x_2)\cdots P^i(x_N)$.

Our security model assumes users know something about the natural noise in the network.  To be more precise, we will assume that users can lower-bound the \emph{noise parameter} of the honest sub-network, denoted $p^*$, which is defined below:

\begin{define}\label{def:noise-parameter}
  Let $P_L^i(\ell)$ and $P_R^i(r)$, for $\ell,r \in \B$ be the probability that the $i$'th link in the left honest sub-network (respectively the right honest sub-network) produces a state $\kb{\phi_\ell}$ (respectively $\kb{\phi_r}$) on any particular single round of the network; see Equation \ref{eq:natural-noise-single}.  Let $j$ be the number of honest left sub-network links, not including the link connecting the honest network to Eve, and let $k$ be the number of honest right sub-network links, not including the link connecting to Eve (these may be zero if Eve directly connects to Alice or Bob).  For any $x \in \B$, define:
    \begin{align*}
      P_L(x) &= \sum_{\substack{\ell^1,\cdots,\ell^j\in\B\\\ell^1\oplus \cdots\oplus\ell^j = x}} P_L^1(\ell^1)\cdots P_L^j(\ell^j)\\
      P_R(x) &= \sum_{\substack{r^1,\cdots,r^k\in\B\\r^1\oplus\cdots\oplus r^k = x}} P_R^1(r^1)\cdots P_R^k(r^k)
    \end{align*}
    If $j=0$, then we simply set $P_L((0,0)) = 1$, similarly for the right network if $k=0$.
    Then, the \emph{noise parameter of the honest sub-network} is defined to be:
    \begin{equation}
      p^* = \sum_{\substack{x,y\in\B \\ x^\phase \oplus y^\phase = 1}}P_L(x)P_R(y).
    \end{equation}
  \end{define}

Note that if $j=k=0$ (i.e., there are no honest repeaters), then $p^* = 0$.

Essentially, the noise parameter characterizes the probability of there being a phase error in either the left or the right honest sub-networks, but not both.  Of course, one can always find a trivial lower-bound for this by setting $p^* = 0$, however, in this case, our key-rate expression will converge towards the normal BB84 key-rate which is expected: if $p^* = 0$, users are assuming there is no natural noise and, so, all observed noise must be the effect of an adversary system.  Once $p^* > 0$, our bound begins to improve over BB84 as we show, later, in our evaluation sections.

Finding a reasonable bound on $p^*$ will depend on context.  If we take the example of two ``safe-zones'' (as shown in Figure \ref{fig:network}), then users can characterize the link-level noise between each safe-zone repeater and use this to easily determine a value for $p^*$.  This would be a suitable lower-bound since it would assume every repeater outside the two safe-zones is adversarial (which may not actually be true and, so, in reality $p^*$ could be higher, and thus the key-rate could be higher).  Our security proof only requires a lower-bound on $p^*$ and users may be pessimistic in their choice of setting for this parameter.

  

\section{Security}

We now derive a key-rate expression for the partially corrupted repeater chain as described in the previous section.  To do this, we will first show how the system can be reduced to a three-party entanglement based protocol.  From this, we will derive a lower-bound on the quantum min entropy $\Hmin^\epsilon(A|E)$ needed to determine a bound on the number of secret key bits as per Equation \ref{eq:PA}.  Our bound will be a function of the observed $X$-basis noise in the final state shared between Alice and Bob, along with the natural noise in the honest sub-networks (or, rather, a lower-bound on the noise parameter $p^*$ defined above).

\subsection{Reduction to an Entanglement Based Protocol}

To analyze the security of a partially corrupted repeater chain, we will first show that it suffices to analyze the key-rate in the following simplified scenario where there are three honest parties: Alice, Bob, and Heidi (four parties total, counting the adversary Eve).  Here, Heidi will represent and simulate the honest sub-network in the actual network protocol (both the left and right sections); see Fig.~\ref{fig:reduction}(b).  We describe this entanglement-based version first, and then later show that analyzing this entanglement based version will lead to results for the actual protocol (described in the previous section).

\subsubsection{Entanglement Based Protocol}\label{sec:eb}
We now present the entanglement based version in its entirety.  Then, in Section \ref{sec:reduction}, we will show how this is representative of the actual network protocol discussed in Section \ref{sec:network-model}.

In this entanglement-based protocol, Heidi first creates two independent states $\ket{\mathcal{L}}_L$ and $\ket{\mathcal{R}}_R$ of the form:
\begin{align}
  \ket{\mathcal{L}} &= \sum_{\ell\in B^N}\sqrt{P_L(\ell)}\ket{\phi_{\ell}}_L\ket{F_\ell}\label{eq:left-honest}\\
  \ket{\mathcal{R}} &= \sum_{r\in B^N}\sqrt{P_R(r)}\ket{\phi_{r}}_R\ket{G_r},\label{eq:right-honest}
\end{align}
where $\braket{F_{\ell}|F_{\ell'}} = \delta_{\ell,\ell'}$ and, similarly, $\braket{G_r|G_{r'}} = \delta_{r,r'}$ where $\delta_{x,y}$ is the Kronecker Delta function.  Furthermore, we have $\ket{G_r} = \ket{G_{r_1}}\otimes\cdots\otimes\ket{G_{r_N}}$ with each $\braket{G_{r_i}|G_{r_i'}} = \delta_{r_i,r'_i}$ (and $\ket{F_\ell}$ may be written in a similar tensor form).  Note that both states are pure states.  The values of $P_L$ and $P_R$ will be determined from the actual honest network she is simulating which will be clear later (essentially, she will be simulating the network and so will set these to values based on the honest network noise, Equation \ref{eq:natural-noise}).  She keeps these states private to herself.

\begin{figure}[t]
   \centerline{\includegraphics[width=3.5in, trim = 0.cm 0.0cm 0.cm 0.1cm, clip]{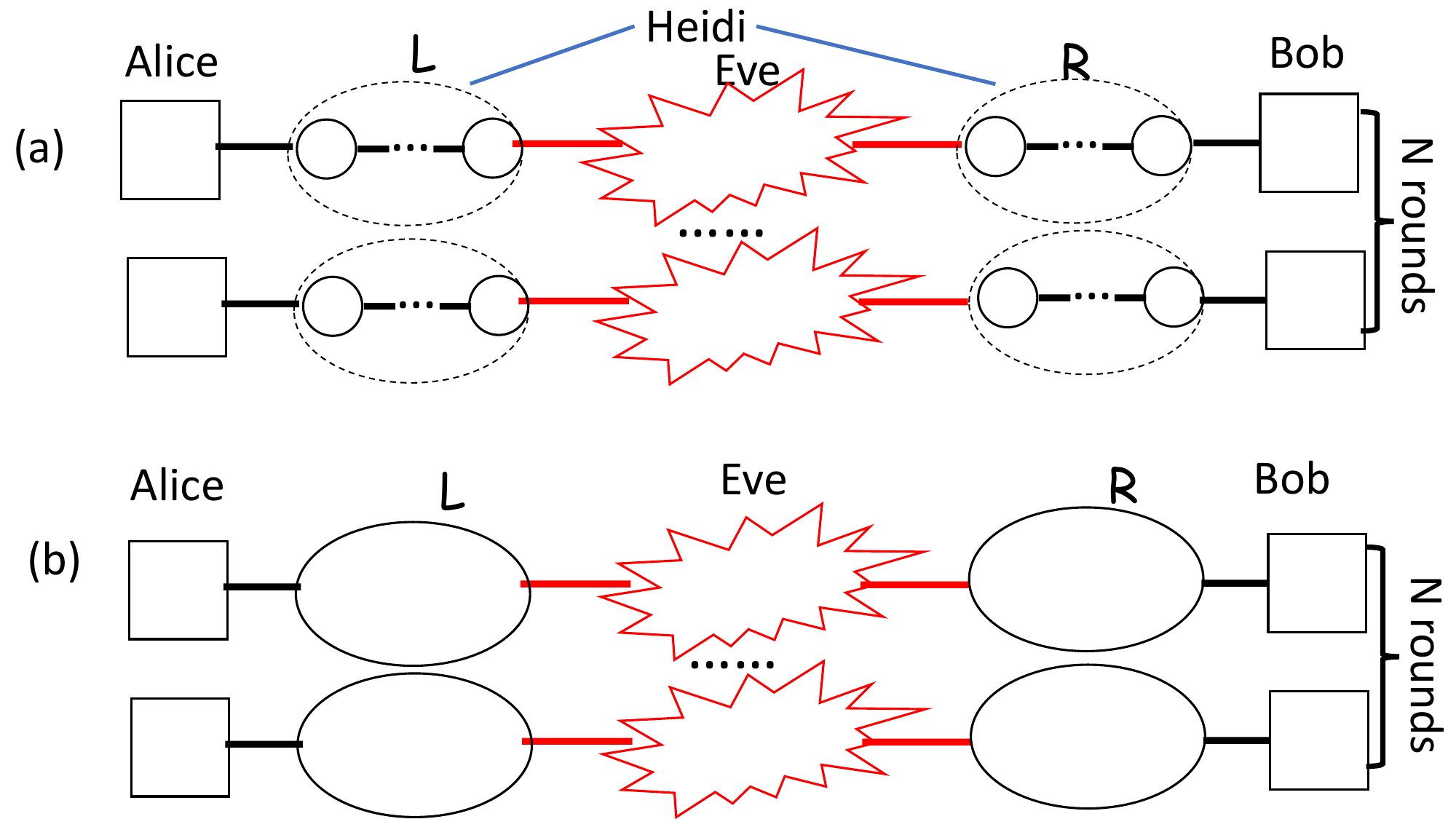}}
\caption{Illustration of the reduction: (a) the original protocol operating in $N$ rounds, and (b) the new protocol after reduction operating in $N$ rounds. Not shown here is that the original protocol (a) also requires classical communication from all repeaters under Eve's control, and each individual repeater while our reduction does not need this.  Also not shown is that Eve may receive qubits from honest repeaters in the real protocol (a), however in our reduction (b), she always prepares states and sends them to honest repeaters.}
\label{fig:reduction}
\end{figure}

Next, Eve creates an arbitrary $2N$ qubit state, entangled with her private ancilla, denoted $\ket{\widetilde{\psi}}_{ME}$, independent of the Left and Right sub-network states that Heidi created.  The $M$ (middle, since Eve is in the middle of the chain) register consists of $2N$ qubits; she then sends the  $M$ register to Heidi, while keeping her ancilla private.  After this, Heidi, who holds $6N$ qubits currently (the $L$, $M$, and $R$ registers - note that each register holds $2N$ qubits), will perform a final network operation $\net$.  This will simulate the honest network's final Bell swaps and classical messages being sent from the last honest repeaters (bordering Eve) to Alice.

This map acts on a $6N$ qubit state $\ket{\phi_\ell,\phi_i, \phi_r}$, for all $\ell,i,r\in\B^N$, in the following manner:
\begin{equation}\label{eq:net-op}
  \net^{\otimes N}\ket{\phi_\ell, \phi_i, \phi_r} = \frac{1}{2^{2N}}\sum_{z,u\in\B^N}(-1)^{g(\ell,i,r;z,u)}\ket{\mbox{``}z,u\mbox{''}}_{cl}\otimes\ket{\phi_{\ell+i + r}}_{AB},
\end{equation}
for some function $g:\B^{5N}\rightarrow \{0,1\}$ whose exact action, though easy to actually derive by simulating the action of the final left and right sub-network Bell swaps (which we do later), is not important to our current discussion. This map $\net$ will simulate the honest network's final operations of performing Bell swaps at the ``border'' repeaters (those nearest Eve) and sending the measurement outcomes.  It will also simulate the final Pauli fix applied by Alice.  Note that the input of $g$ depends on $5N$ Bell states since this is a function of all $N$ rounds and, for each round, there are five important Bell values: the two messages output by the map (output by the honest network or Heidi in this case), and the three input Bell states (from $M$, $L$, and $R$).  Also, recall the additive notation for Bell states, defined in Section \ref{section:notation}.

The above map is actually unitary, which is not obvious from the above definition since $g$'s action is not defined; however, that it is an unitary, will be clear when we write out explicitly how the map operates in the next section.  Essentially, it performs a SWAP followed by a delayed Bell basis measurement and recording the result in the ``$cl$'' register; it then performs the final Pauli correction which Alice would normally have done.  We use quotes in the ``$cl$'' register as they will, later, represent the classical message sent from the last two repeaters to Alice.

Finally the $A$ and $B$ registers (which are $N$-qubits each) are sent to Alice and Bob respectively, while the ``$cl$'' register (the classical message register) is measured and the outcome broadcasted to all parties - this represents the final correction term that normally would have been broadcast (i.e., it represents the sum of all the honest messages that normally would have been broadcast, not the entire message transcript that would have been broadcast in the actual prepare and measure protocol).  The $G$ and $F$ registers are simply discarded (i.e., traced out).  Alice and Bob are then free to run the E91 protocol as normal, namely, they choose a random sample and measure in the $X$ basis to test the fidelity of the state, while measuring the remaining systems in the $Z$ basis to derive a secret key after error correction and privacy amplification.  See Figure \ref{fig:eb-protocol}.

\begin{figure}
   \centerline{\includegraphics[width=.75\linewidth, trim = 1.5cm 20.0cm 1.5cm 1.1cm, clip]{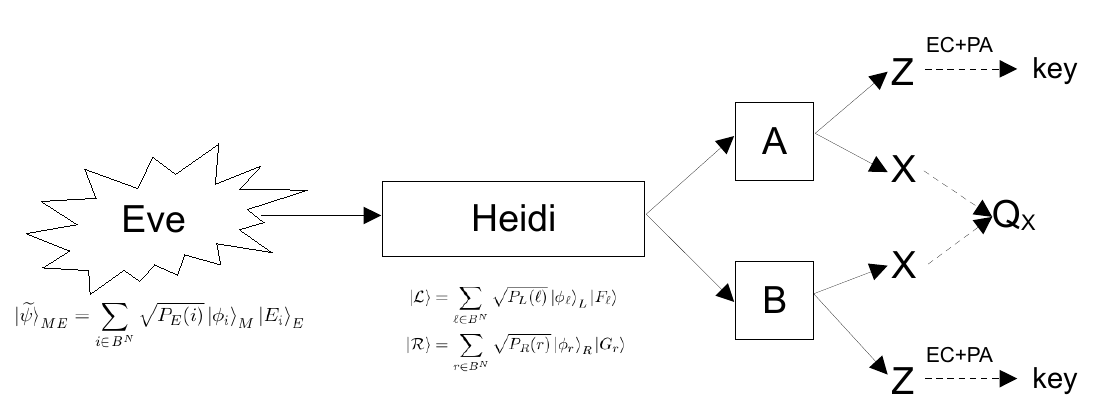}}
\caption{A high-level view of the entanglement-based protocol.  First, Eve creates a state $\ket{\widetilde{\psi}}_{ME}$, and sends the $M$ portion to Heidi.  Heidi, simulating the honest portion of the network, creates Left and Right network states, then applies the network operator $\net$.  She then sends the resulting qubits to Alice and Bob.  Alice and Bob then measure some of the qubits in the $X$ basis, used to compute $Q_X$, the parity of their $X$ basis outcomes.  The remaining qubits are measured in the $Z$ basis which is used to produce their final secret key after Error Correction (EC) and Privacy Amplification (PA).  We show that security of this entanglement based protocol will imply security of the actual repeater-chain protocol.  Not shown in this figure is the classical communication sent from Heidi, and also from Alice and Bob.}
\label{fig:eb-protocol}
\end{figure}

\subsubsection{Reduction}\label{sec:reduction}

We claim security of the above entanglement-based protocol will imply security of the actual partially corrupted repeater chain.  To show this, we trace the execution of the actual network, and compare with the state that would have been produced in the entanglement based protocol above.

First, since messages are not sent until after all $N$ rounds are complete and, furthermore, since Eve cannot tamper with these messages (see Assumptions 2 and 3 as defined in Section \ref{sec:network-assumptions}), Eve cannot adapt her attack based on the results of the honest network's Bell messages.  On account of this, giving Eve the ability to prepare all $2N$ qubits simultaneously for those repeaters neighboring her can only give her a greater advantage than in the potentially more realistic case where (1) she would need to feed in qubits to the neighboring repeaters in smaller blocks and/or (2) she would receive qubits prepared by the honest network instead of being free to create and send her own.  Thus, we analyze the case where Eve prepares all $2N$ qubits for the neighboring repeaters as this can only be to her advantage - any other scenario would give Eve fewer attack opportunities and, thus, more uncertainty in her system.  We can thus assume that Eve prepares the state:
\begin{equation}
  \ket{\widetilde{\psi}}_{ME} = \sum_{i\in\B^N}\sqrt{P_E(i)}\ket{\phi_i}_M\ket{E_i}_E,
\end{equation}
where each $\ket{E_i}$ is some arbitrary normalized, but not necessarily orthogonal, state in Eve's ancilla, and $P_E(i)$ is arbitrary such that $\sum_iP_E(i) = 1$ (we are not assuming an i.i.d. attack).  Note that Eve is allowed, in the actual network scenario, to send a classical message transcript (consisting of $2N$-bits for every repeater under her control).  However, this message can only cause Alice to apply an incorrect Pauli gate later when correcting the Bell state.  Eve can simulate this, without sending a classical message, by applying a suitable Pauli gate herself to her state above, before sending the qubits to the corresponding neighboring honest repeaters.  Thus, Eve's ability to send a classical message (the output of her adversarial repeater network), does not give her any additional power if we assume she is allowed to always prepare qubit states and send them to her neighbors and so we do not need to consider it further.  We do, however, need to consider the classical messages sent by the honest repeaters as that message transcript may be affected by Eve's state above as we soon see.

Now, let's consider the action of the honest network.  First, it attempts to establish link-level entanglement on all edges between honest repeaters (but not between honest repeaters and Eve's corrupted region as those links are under Eve's control as discussed) - see Figure \ref{fig:reduction}.  By Assumption 4, the noise in these links results in a mixed Bell state.  Let $j$ be the number of fiber links in the left honest network (not counting the fiber link leading to Eve which, we assume, is corrupted) and $k$ be the number of fiber links in the right honest network.  Let $\rho^i_L$ (respectively $\rho^i_R$) be the state of the quantum system shared between nodes on link $i$ on the left (respectively right) sub-network.  By Assumption 4 and Equation \ref{eq:natural-noise}, we can write these states as:
\begin{align}
  \rho_L^i = \sum_{\ell^i\in\B^N}P^i_L(\ell^i)\kb{\phi_{\ell^i}}_L\\
  \rho_R^i = \sum_{r^i\in\B^N}P^i_R(r^i)\kb{\phi_{r^i}}_R
\end{align}
At this point, the joint state of the network is:
\[
  \rho_L^1\otimes\cdots\rho_L^j\otimes\kb{\widetilde{\psi}}\otimes\rho_R^1\otimes\cdots\otimes\rho_R^k.
\]

The honest network will now perform Bell operations.  Note that the actual order of the Bell measurements performed by the network does not matter, since all Bell swaps at this point are performed by honest repeaters.  In fact, the network might as well perform Bell swaps on the left and right honest networks while waiting for Eve's $2N$ qubits.

Let's look closer at the Bell swap operation performed by a single honest repeater.  Assume that we have a chain: $X-Y-Z$, where $X-Y$ and $Y-Z$ share some quantum state of the form $\ket{\phi_a}$ and $\ket{\phi_b}$ for some $a,b\in\B$.  Assume repeater $Y$ is performing the Bell swap.  Note that:
\begin{align*}
  \ket{\phi_a,\phi_b} &= \frac{1}{2}\left(\ket{0,a^\bit,0,b^\bit} + (-1)^{b^\phase}\ket{0,a^\bit, 1, \bar{b}^\bit} + (-1)^{a^\phase}\ket{1,\bar{a}^\bit, 0, b^\bit} + (-1)^{a^\phase+b^\phase}\ket{1, \bar{a}^\bit, 1, \bar{b}^\bit}\right)\\
  &\cong\frac{1}{2}\left(\ket{a^\bit,0}\ket{0,b^\bit} + (-1)^{b^\phase}\ket{a^\bit,1}\ket{0,\bar{b}^\bit} + (-1)^{a^\phase}\ket{\bar{a}^\bit,0}\ket{1,b^\bit} + (-1)^{a^\phase+b^\phase}\ket{\bar{a}^\bit, 1}\ket{1, \bar{b}^\bit}\right),
\end{align*}
where, above, we simply permuted the subspaces so that the two qubits owned by $Y$ are on the left while the third qubit is the one held by $X$ and the fourth (right-most) qubit is the one held by $Z$.  Repeater $Y$ is now going to perform a Bell measurement on the qubits it holds (the left-two qubits in the above, permuted, state).

If $a^\bit = 0$, the above state is found to be:
\begin{align*}
  \ket{\phi_a,\phi_b} \cong &\frac{1}{2\sqrt{2}}\ket{\phi_0^0}\otimes \left(\ket{0,b^\bit} + (-1)^{a^\phase + b^\phase}\ket{1, \bar{b}^\bit}\right)\\
                      +&\frac{1}{2\sqrt{2}}\ket{\phi_0^1}\otimes \left(\ket{0,b^\bit} + (-1)^{a^\phase + b^\phase + 1}\ket{1, \bar{b}^\bit}\right)\\
                      +(-1)^{b^\phase}&\frac{1}{2\sqrt{2}}\ket{\phi_1^0}\otimes \left(\ket{0,\bar{b}^\bit} + (-1)^{a^\phase+b^\phase}\ket{1, b^\bit}\right)\\
  +(-1)^{b^\phase}&\frac{1}{2\sqrt{2}}\ket{\phi_1^1}\otimes \left(\ket{0,\bar{b}^\bit} + (-1)^{a^\phase+b^\phase+1}\ket{1, b^\bit}\right)
\end{align*}
Otherwise, if $a^\bit = 1$, then the state is:
\begin{align*}
  \ket{\phi_a,\phi_b} \cong (-1)^{b^\phase}&\frac{1}{2\sqrt{2}}\ket{\phi_0^0}\otimes \left(\ket{0,\bar{b}^\bit} + (-1)^{a^\phase+b^\phase}\ket{1, b^\bit}\right)\\
                      +(-1)^{b^\phase+1}&\frac{1}{2\sqrt{2}}\ket{\phi_0^1}\otimes \left(\ket{0,\bar{b}^\bit} + (-1)^{a^\phase+b^\phase+1}\ket{1, b^\bit}\right)\\
                      +&\frac{1}{2\sqrt{2}}\ket{\phi_1^0}\otimes \left(\ket{0,b^\bit} + (-1)^{a^\phase+b^\phase}\ket{1, \bar{b}^\bit}\right)\\
  +&\frac{1}{2\sqrt{2}}\ket{\phi_1^1}\otimes \left(\ket{0,b^\bit} + (-1)^{a^\phase+b^\phase+1}\ket{1, \bar{b}^\bit}\right)
\end{align*}
Thus, for any $a^\bit$, we can write the joint state as:
\begin{equation}\label{eq:sec:bell-action}
  \ket{\phi_a,\phi_b}_{XYZ} \cong \frac{1}{2}\sum_{x\in\B}(-1)^{f(a,b,x)}\ket{\phi_x}_Y\otimes\ket{\phi_{a+b+x}}_{XZ},
\end{equation}
for some function $f:\B^3\rightarrow \{0,1\}$ which can be determined from the above, though the exact mapping is not important at the moment.  Essentially, the above is defining a map that applies a SWAP to the middle two qubits which will simulate a delayed measurement of the middle repeater $Y$ (the left-most register will later be measured in the Bell basis, and the outcome reported).

The left two registers are then measured in the Bell basis leading to a classical outcome $x\in\B$ which is saved in a classical register spanned by $\ket{\mbox{``}x\mbox{''}}$ for all $x\in\B$.  Note we use quotes, here, to distinguish the fact that this system is classical and will later be used as a message system.  The final state, then, is:
\begin{equation}
  \ket{\phi_a,\phi_b} \mapsto \frac{1}{4}\sum_{x\in\B}\kb{\mbox{``}x\mbox{''}}\otimes\kb{\phi_{a+b+x}},
\end{equation}
where, recall from Section \ref{section:notation}, $\ket{\phi_{a+b+x}} = \ket{\phi_{a^\bit+b^\bit+x^\bit}^{a^\phase + b^\phase + x^\phase}}$ (where all arithmetic in the subscript and superscript is done bitwise modulo two of course).

Returning to the protocol, let's assume the left and right honest sub-networks perform their Bell swaps $N$ times on each repeater, except for the repeaters neighboring Eve.  On the left sub-network, the state becomes, after all $j-1$ repeaters perform the above Bell measurements on all $N$ rounds:
\begin{align}
  \rho_L^1\otimes\cdots\rho_L^j &\mapsto \frac{1}{4^{N(j-1)}}\sum_{x^1,\cdots,x^{j-1}\in\B^N}\kb{\mbox{``}{x^1, \cdots, x^{j-1}}\mbox{''}}\otimes\sum_{\ell^1,\cdots, \ell^j\in\B^N} P_L^1(\ell^1)\cdots P_L^j(\ell^j)\kb{\phi_{\ell^1+\cdots+\ell^j+x^1+\cdots+x^{j-1}}}\notag\\
  &= \frac{1}{4^{N(j-1)}}\sum_{x^1,\cdots,x^{j-1}\in\B^N}\kb{\mbox{``}{x^1, \cdots, x^{j-1}}\mbox{''}}\otimes\sum_{\ell\in\B^N}P_L(\ell)\kb{\phi_{\ell+x^1+\cdots+x^{j-1}}} := \rho_L,
\end{align}
where:
\begin{equation}
  P_L(\ell) = \sum_{\substack{\ell^1,\cdots,\ell^j\in\B^N\\\ell^1+\cdots+\ell^j = \ell}}P_L^1(\ell^1)\cdots P_L^j(\ell^j).
\end{equation}
Similarly, the right honest sub-network's state can be written as:
\begin{equation}
  \rho_R = \frac{1}{4^{N(k-1)}}\sum_{y^1,\cdots,y^{k-1}\in\B^N}\kb{\mbox{``}{y^1, \cdots, y^{k-1}}\mbox{''}}\otimes\sum_{r\in\B^N}P_R(r)\kb{\phi_{r+y^1+\cdots+y^{k-1}}},
\end{equation}
with a similar definition for $P_R(r)$ based on each $P_R^i(\cdot)$.  The overall network, then, is in the state:
\[
  \rho_L\otimes\kb{\widetilde{\psi}}\otimes\rho_R.
\]

The two last repeaters (one to the left of Eve and one to the right) now perform their final Bell swaps.  To model this, let's consider a purification of the above network state:
\begin{align}
  \rho_L\otimes\kb{\widetilde{\psi}}\otimes\rho_R = &tr_{GF}\left(\frac{1}{4^{N(j+k-2)}}\sum_{\vec{x},\vec{y}}\ket{\mbox{``}{\vec{x},\vec{y}}\mbox{''}}\bra{\mbox{``}{\vec{x},\vec{y}}\mbox{''}}\right.\notag\\
  &\otimes \left.P\left(\sum_{\ell,i,r\in\B^N}\sqrt{P_L(\ell)P_R(r)P_E(i)}\ket{\phi_{\ell+\vec{x}}, \phi_i,\phi_{r+\vec{y}}}\ket{F_\ell, G_r, E_i}\right)\right)
\end{align}
Above, the sum is over vectors of messages $\vec{x} = (x^1,\cdots, x^{j-1})$ and $\vec{y} = (y^1,\cdots, y^{k-1})$.
Also, we purified the left and right network states independently by appending ancillas spanned by $\ket{F_\ell}$ (for the left honest network) and $\ket{G_r}$ for the right network.  Of course $\braket{F_\ell|F_{\ell'}} = \delta_{\ell,\ell'}$ and similarly for $\braket{G_r|G_{r'}}$.  Furthermore, we may write $\ket{F_\ell} = \ket{F_{\ell_1}}\ket{F_{\ell_2}}\cdots$ and, similarly, for $\ket{G_r}$ due to the i.i.d. structure of the natural noise..

Consider a particular $\vec{x}$ and $\vec{y}$.  Using the SWAP and delayed measurement technique, used to derive Equation \ref{eq:sec:bell-action}, twice, first for the left honest repeater which owns half the qubits of $\phi_{\ell+\vec{x}}$ and half the qubits from $\phi_i$, then again for the right honest repeater which owns the other half of $\phi_i$ and half of $\phi_{r+\vec{y}}$, yields the state, for this particular outcome $\vec{x}$ and $\vec{y}$:
\begin{align}
  &\sum_{\ell,i,r\in\B^N}\sqrt{P_L(\ell)P_R(r)P_E(i)}\ket{\phi_{\ell+\vec{x}}, \phi_i,\phi_{r+\vec{y}}}\ket{F_\ell, G_r, E_i}\notag\\
  \mapsto&\frac{1}{2^{2N}}\sum_{z,u\in\B^N}\ket{\phi_z,\phi_u}_{ZU}\otimes\sum_{\ell,i,r\in\B^N}(-1)^{g(\ell,i,r;z,u)}\sqrt{P_L(\ell)P_R(r)P_E(i)}\ket{\phi_{\ell+i+r+\vec{x}+\vec{y} + z + u}}\ket{F_\ell, G_r, E_i}\label{eq:net-op-real}
\end{align}
where the function $g:\B^{5N}\rightarrow\{0,1\}$ can be determined through the action of function $f$, though, again, its exact structure is not important to the analysis.  Note this is the same function $g$ as in Equation \ref{eq:net-op}.  The repeaters will then measure the $ZU$ system resulting in:
\begin{equation}
  \frac{1}{4^{2N}}\sum_{z,u\in\B^N}\kb{\mbox{``}z,u\mbox{''}}\otimes P\left(\sum_{\ell,i,r\in\B^N}(-1)^{g(\ell,i,r;z,u)}\sqrt{P_L(\ell)P_R(r)P_E(i)}\ket{\phi_{\ell+i+r+\vec{x}+\vec{y} + z + u}}\ket{F_\ell, G_r, E_i}\right)
\end{equation}
Note that, the phase induced by the $g$ function will affect the probability of a repeater observing a particular message $z,u$ - however, working out algebraically exactly this distribution turns out to be not necessary for the min-entropy analysis later; thus, while the $g$ function is important, writing out its exact action is not, for the sake of our security proof.

From the above, the message transcript $\vec{x}$, $\vec{y}$, $z$, and $u$ are sent to Alice who performs the correct Pauli operations.  The ordering of these is not important - let's assume she fixes the $\vec{x}$ and $\vec{y}$ portions first, resulting in state:
\begin{equation}
  \sigma = \frac{1}{4^{2N}}\sum_{z,u\in\B^N}\kb{\mbox{``}z,u\mbox{''}}\otimes P\left(\sum_{\ell,i,r\in\B^N}(-1)^{g(\ell,i,r;z,u)}\sqrt{P_L(\ell)P_R(r)P_E(i)}\ket{\phi_{\ell+i+r + z + u}}\ket{F_\ell, G_r, E_i}\right)
\end{equation}
Note that the above was all conditioning on a particular, but arbitrary, outcome $\vec{x}$ and $\vec{y}$.  Considering, now, all possible $\vec{x}$ and $\vec{y}$ as a mixed state, before Alice applies operators to fix the $z$ and $u$ messages, the entire system, then, can be written as:
\begin{equation}\label{eq:final-state-real}
  tr_{GF}\left(\frac{1}{4^{N(j+k-2)}}\sum_{\vec{x},\vec{y}}\kbb{\mbox{``}{\vec{x},\vec{y}}\mbox{''}}\otimes \sigma\right)
\end{equation}
Note that the $\vec{x}$ and $\vec{y}$ message systems are now independent of the state $\sigma$.  Thus, regardless of the message outcome of the honest repeater network (the portion that did not interact with the adversary), the overall entropy computation will be identical. This is not too surprising considering the operation of an honest repeater with honest inputs, and the fact that the noise in the left and right honest sub-networks is well characterized.  Therefore, there is no need to consider this part of the message transcript when analyzing the security of the protocol.  Note that it \emph{is} important to consider the message transcript from the honest repeaters that do interact with Eve (the $Z$ and $U$ messages), as these messages may not be independent of the final state, especially Eve's ancilla state.

Equation \ref{eq:final-state-real} represents the system before Alice applies the Pauli gates based on the $z$ and $u$ messages (from the repeaters bordering Eve).  Since the $\vec{x}$ and $\vec{y}$ messages are independent, the only important element is the state $\sigma$.  That is, if we know Eve's uncertainty in $\sigma$, we can derive the key-rate of the system.  However, it is not difficult to see that the entanglement based protocol, described earlier, will produce exactly the same state $\sigma$.  Indeed, by tracing the protocol's execution, where Eve produces the same attack state in the entanglement based version, and Heidi uses the honest network probabilities $P_L$ and $P_R$ for her created states, the resulting system in the entanglement based version will produce, exactly, $\sigma$.  Note that the network operation map $\net^{\otimes N}$ (Equation \ref{eq:net-op}) is defined to simulate, exactly, the Bell swap operations of the honest network, derived above (in particular, to simulate Equation \ref{eq:net-op-real}).  Thus, we can conclude that it suffices to analyze the entanglement based version - any entropy computation there will translate to an equivalent computation in the real network setting (Equation \ref{eq:final-state-real}).

\subsection{Key-Rate Bound}

We now derive a key-rate bound for the partially corrupted repeater chain, or, rather, the entanglement based protocol discussed in the previous section.  To do so, we first derive a bound on the conditional min entropy after running the E91 protocol in this network setup.  Equation \ref{eq:PA} can then be used to translate this to a key-rate bound.  Our main result is stated, and proven, in the following theorem:

\begin{theorem}\label{thm:main}
  Let $\epsilon > 0$ and let $\rho_{ABE}$ be the state produced through the entanglement based protocol discussed above in Section \ref{sec:eb}, where the $A$ and $B$ registers consist of $N$ qubits each. Let $p^*$ be the noise parameter of the honest network (or a lower-bound on the noise parameter) as defined in Definition \ref{def:noise-parameter}.  Assume that Alice and Bob choose a random subset $t$ of size $m\le N/2$ and measure their corresponding qubits in the $X$ basis resulting in outcome $q_A$ and $q_B$ (which are $m$-bit strings).  Denote by $Q_X$ to be $Q_X = q_A\oplus q_B$ (this represents the $X$-basis error string).  Alice and Bob then measure their remaining $n=N-m$ qubits in the $Z$ basis resulting in state $\rho^{(t,q_A,q_B)}_{ABE}$. Then it holds that:
  \begin{equation}
    Pr\left[\Hmin^{8\epsilon + 2(2\epsilon)^{1/3}}(A|E)_{\rho^{(t,q_A,q_B)}} \ge n\left(1 - \bar{h}\left(\frac{w\left(Q_X\right) - p^* + \delta'}{1-2p^*} + \delta\right)\right)\right] \ge 1 - 2(2\epsilon)^{1/3},
  \end{equation}
  where the probability is over the subset choice $t$ and the observed outcomes $q_A,q_B$, and where:
  \begin{align}
    \delta = \sqrt{\frac{(N+2)\ln\frac{2}{\epsilon^2}}{mN}} &&\text{and}&& \delta' = \sqrt{\frac{\ln\frac{2}{\epsilon}}{2m}}.\label{eq:thm:deltap}  
  \end{align}
\end{theorem}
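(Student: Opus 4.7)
The plan is to reduce to the entanglement-based protocol of Section \ref{sec:reduction} and then combine the Bouman--Fehr sampling framework with a Hoeffding concentration step on the honest noise. The central new idea, compared with the standard BB84 analysis, is a \emph{virtual local correction} on Alice's side that ``peels off'' the honest phase noise, so that the Bouman--Fehr support bound applies directly to Eve's attack phase $i^\phase$ rather than to the observed mixture $k^\phase = i^\phase \oplus n$.

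I will first give Eve the $F,G$ registers encoding Heidi's honest outcomes $\ell, r$; by (\ref{eq:min-entropy-mixed}), any lower bound on $\Hmin^\epsilon(A|EFG)$ carries over to $\Hmin^\epsilon(A|E)$. Conditioned coherently on $F,G$, Alice applies $Z^{n_j}$ to her $j$-th qubit, where $n = \ell^\phase \oplus r^\phase$. This unitary commutes with her $Z$-basis key measurement, so the raw-key distribution is unchanged, yet it transforms each Bell state $\ket{\phi_{\ell + i + r}}_{AB}$ into $\ket{\phi_{(\ell + i + r)^\bit,\, i^\phase}}$, making the phase component of the Bell index equal to Eve's attack phase $i^\phase$ alone.

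I then apply Theorem \ref{thm:sample} with the sampling strategy $\Psi_4$ on the $AB$ register. A virtual Bell measurement on the sampled block is at least as informative to Eve as Alice and Bob's actual $X$-basis broadcasts together with her knowledge of $n$, so any min-entropy lower bound derived in the virtual picture transfers. With the advertised $\delta$, Lemma \ref{lemma:sample-used} gives $\epsilon^{cl} \le \epsilon^2$, and the ideal state's post-sampled $AB_{-t}$ register has Bell support in $\{j_{-t}: w(j_{-t}^\phase) \le w(j_t^\phase) + \delta\}$. Because $j^\phase = i^\phase$ after the correction, Lemma \ref{lemma:bell-entropy} gives $\Hmin(A|E) \ge n(1 - \bar{h}(w(i_t^\phase) + \delta))$ for the ideal state. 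To replace the unobservable $w(i_t^\phase)$ with $w(Q_X)$, I use $Q_X = i_t^\phase \oplus n_t$ together with the fact that, conditioned on $i_t^\phase$, the bits of $n_t$ are i.i.d.\ $\mathrm{Bern}(p^*)$; Hoeffding's inequality then yields $w(Q_X) \ge p^* + w(i_t^\phase)(1 - 2p^*) - \delta'$ with probability at least $1 - \epsilon$, which inverts (using $p^* < 1/2$) to $w(i_t^\phase) \le (w(Q_X) - p^* + \delta')/(1 - 2p^*)$, exactly the argument inside $\bar{h}$ in the theorem.

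The final step is to fold both the Bouman--Fehr approximation error and the Hoeffding probability into a single smoothing parameter: by the triangle inequality the ``ideal-plus-Hoeffding-good'' state is within $2\epsilon$ (in half trace distance) of the true post-measurement state, so invoking Lemma \ref{lemma:cptp-entropy} with $\epsilon_0 = 2\epsilon$ yields exactly the advertised $8\epsilon + 2(2\epsilon)^{1/3}$ smoothing and $2(2\epsilon)^{1/3}$ failure probability. I expect the main technical obstacle to be justifying and formalizing the virtual $Z^n$ correction step---showing it is a genuine no-op for Alice's $Z$-basis key while transforming the Bouman--Fehr support constraint from the mixed $k^\phase$ to the pure adversarial $i^\phase$---and ensuring that the subsequent Hoeffding step is legitimate despite $i^\phase$ arising from an arbitrary coherent quantum attack rather than a classical i.i.d.\ source.
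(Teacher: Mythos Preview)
Your proposal is correct and reaches the same bound, but the route is genuinely different from the paper's in one key organizational choice. The paper applies Theorem~\ref{thm:sample} directly to Eve's prepared state $\ket{\widetilde{\psi}}_{ME}$ \emph{before} the network operation $\net^{\otimes N}$ acts; since $M$ is already written as a Bell superposition indexed by $i$, the ideal states immediately carry the constraint $w(i^\phase_t)\dc w(i^\phase_{-t})$, and all downstream operations (network map, Alice's Pauli fix, the $X$- and $Z$-basis measurements) are just CPTP maps that preserve the trace-distance bound. The paper then builds its ``ideal-ideal'' states by restricting the honest $(\ell,r)$ mixture to the Hoeffding-good set $G_i$, which is exactly your concentration step phrased as a state truncation. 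Your approach instead lets the network act first, hands $F,G$ to Eve, and introduces the controlled-$Z^n$ virtual correction so that the Bell phase of the $AB$ register becomes $i^\phase$ rather than $i^\phase\oplus n$; only then do you invoke Theorem~\ref{thm:sample} on $AB$. Both routes isolate $i^\phase$ for the sampling constraint and then use the identical Hoeffding inversion $w(i^\phase_t)\le (w(Q_X)-p^*+\delta')/(1-2p^*)$, and both finish with Lemma~\ref{lemma:cptp-entropy} at smoothing parameter $2\epsilon$.

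What each buys: the paper's ordering avoids the virtual-correction bookkeeping entirely---there is no need to argue that a unitary controlled on registers you have already given to Eve leaves $\Hmin(A|EFG)$ invariant, nor to reconcile the virtual sample outcome $i^\phase_t$ with the actually broadcast $Q_X$. Your ordering, on the other hand, makes the ``peeling off'' of honest phase noise very transparent and would generalize more readily to settings where Eve's input register $M$ is not itself naturally Bell-diagonal. The point you flag as the main obstacle---that the controlled $Z^n$ commutes with Alice's $Z$-basis key measurement while anticommuting with her $X$-basis sample measurement---is exactly the hinge of your argument, and it does go through: the $Z$-commutation guarantees the key register $A$ is unchanged, and the $X$-anticommutation is precisely what converts the observed $Q_X$ into $i^\phase_t\oplus n_t$, which is what your Hoeffding step then handles. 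Your concern about Hoeffding being legitimate despite a coherent attack is also correctly resolved: the honest noise $n_t$ is generated by Heidi independently of $\ket{\widetilde{\psi}}_{ME}$, so conditioned on any value of $i^\phase_t$ the bits of $n_t$ remain i.i.d.\ $\mathrm{Bern}(p^*)$, matching the paper's computation of $\mu_i$.
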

  \textbf{Sketch of Proof: }
  Our proof proceeds in three main steps.  First, we will use Theorem \ref{thm:sample} to construct ideal states and model the protocol under these states.  Next, we will define a new ``ideal-ideal'' state where the previous ideal states are better behaved in a noisy network.  Finally, we analyze the entropy of this ideal-ideal state, and then use Lemma \ref{lemma:cptp-entropy} to promote the analysis to the real network state.  Note that steps one and three follow from ideas used in the proof of sampling-based entropic uncertainty relations \cite{yao2022quantum,krawec2019quantum}.

  In some more detail, consider the overall ``real'' state of the system: Eve creates $\ket{\widetilde{\psi}}_{ME}$, giving the $2N$ qubit $M$ register to the honest network simulator Heidi, while Heidi also creates $\ket{\mathcal{L}} \otimes \ket{\mathcal{R}}$.  The state Heidi creates is well known due to our assumptions on the honest network noise; the state that Eve creates is unknown.  Heidi then operates on the $L, M,$ and $R$ registers using operator $\net^{\otimes N}$ (see Equation \ref{eq:net-op}).  This produces a quantum state $\rho_{ABE}$.  Alice and Bob then choose a random sample $t$ of size $m < N/2$, measure their qubits, indexed by this subset, in the $X$ basis to test the fidelity of the channel, reporting the parity of their outcomes $Q_X\in \{0,1\}^m$ (which should be the all zero string - any ``one'' in this string indicates an $X$ basis error).  The difficulty is that, given a particular $Q_X$ and subset $t$, we need to know how many errors Eve's arbitrary state induced.

  Given Eve's state, we can construct ideal states using Theorem \ref{thm:sample} which are, on average over the subset choice, $\epsilon$-close to the real state $\ket{\widetilde{\psi}}$.  Heidi will create the same $\ket{\mathcal{L}}$ and $\ket{\mathcal{R}}$ states as before and perform the same network operation, but now on the joint state consisting of an ideal state $\ket{\nu^t}$.  Since quantum operations cannot increase trace distance, the resulting state is $\epsilon$-close to the actual state we want to analyze.

  Even now, however, analyzing the quantum min entropy of this state cannot easily be done.  Instead, we must argue that, with high probability, the natural phase errors cannot always cancel out Eve's induced error.  Since quantum min entropy is a ``worst-case'' entropy, analyzing the entropy of the above ``ideal'' state will result in a trivial, or at least very poor, bound.  Instead, we must next define what we call ``ideal-ideal'' states - ideal states which have a further ideal property that the natural noise actually adds to Eve's overall induced error, instead of taking away.  Furthermore, these ideal-ideal states have the additional property that, given the observed phase error and the known network noise (Definition \ref{def:noise-parameter}), we can upper-bound Eve's phase error, even though we cannot directly observe it.  These new states are $2\epsilon$-close to the actual real state and, so, arguments made about the min-entropy of these ideal-ideal states will translate, using Lemma \ref{lemma:cptp-entropy}, to the real state, giving us our result.
\begin{proof}
  We now formally prove Theorem \ref{thm:main}.
  
  \textbf{Step 1 - Ideal State Construction: }
Consider the entanglement based protocol discussed in Section \ref{sec:eb}.  Let $\ket{\widetilde{\psi}}_{ME}$ be the state Eve creates, where the $M$ register consists of $2N$-qubits and the $E$ system is arbitrary. The $M$ system is sent to Heidi.  As discussed in the previous section, Heidi creates two purified states of the form:
\begin{equation}
  \ket{\mathcal{L}} = \sum_{\ell \in \B^N}\sqrt{P_L(\ell)}\ket{\phi_{\ell^\bit}^{\ell^\phase}}_L\ket{F_\ell}
\end{equation}
and:
\begin{equation}
  \ket{\mathcal{R}} = \sum_{r \in \B^N}\sqrt{P_R(r)}\ket{\phi_{r^\bit}^{r^\phase}}_R\ket{G_r},
\end{equation}
where $P_L(\ell)$ and $P_R(r)$ depend on the honest left and right sub-networks that Heidi is simulating (see the previous section and, in particular, Equations \ref{eq:left-honest} and \ref{eq:right-honest}). Note the $\ket{F_\ell}$ states are orthonormal and separable (similar for the $\ket{G_r}$ states).  The joint state, therefore, may be written (up to a permutation of the subspaces) as:
\begin{equation}\label{eq:real-state}
  \sum_{\ell, r \in \B^m} \sqrt{P_L(\ell)P_R(r)}\ket{\phi_{\ell^\bit}^{\ell^\phase}}\ket{\phi_{r^\bit}^{r^\phase}}\ket{\widetilde{\psi}}_{ME}\ket{F_\ell}\ket{G_r}
\end{equation}

From the above state, Heidi will perform the network operation $\net^{\otimes N}$ (see Equation \ref{eq:net-op}), which will simulate the honest sub-networks' Bell swaps, message transmissions, and Alice's final Pauli correction; Heidi then sends $N$ qubits to Alice, $N$ qubits to Bob, and broadcasts the final correction term (a $2N$ bit message).  Alice and Bob will then run the E91 protocol on the resulting state.  Namely, they will choose a random subset $t$ of size $m$, measure their qubits indexed by this subset in the $X$ basis, compute and report the parity of the outcomes $Q_X$ (which, ideally, should be the zero string), and then measure the remaining qubits in the $Z$ basis.  These last measurements form the raw key which is subsequently processed further using error correction and privacy amplification.  Call this final state $\rho^t_{ABE}$ (which is a mixed state over all possible $Q_X$ observations).

Instead of analyzing the actual state of the system above, we will instead apply Theorem \ref{thm:sample} and analyze ideal states where sampling is well-behaved.  We apply the theorem not to the entire state, Equation \ref{eq:real-state}, but instead to the state Eve prepares $\ket{\widetilde{\psi}}_{ME}$.  We use the sampling strategy $\Psi_4$, analyzed in Lemma \ref{lemma:sample-used}, with respect to the Bell basis. From this, we have ideal states $\{\ket{\ideal^t}_{ME}\}_t$, indexed by subsets $t$, such that:
\begin{equation}\label{eq:td}
  \frac{1}{2}\trd{\sum_tP_T(t)\kb{t}\otimes\left(\kb{\widetilde{\psi}}_{ME} - \kb{\ideal^t}_{ME}\right)} \le \sqrt{\epsilon^{cl}} = \epsilon,
\end{equation}
where the last equality follows from Lemma \ref{lemma:sample-used} and our choice of $\delta$.
Above, each $\ket{\ideal^t}_{ME} \in \text{span}\left(\mathcal{G}_t\right)\otimes\mathcal{H}_E$, with:
\begin{equation}
  \mathcal{G}_t = \left\{i\in\B^N \st w(i^\phase_t) \dc w(i_{-t}^\phase)\right\}
\end{equation}
Note that we are applying Theorem \ref{thm:sample} just to the state Eve creates and not to the entire joint state.  We now compute the min entropy in this ideal case and will later promote this analysis to the real case (where Eve's states are not ideal).

Consider a particular subset $t$ and ideal state $\ket{\ideal^t}$.  It is clear that we may write this state in the following form (up to a permutation on individual qubit subspaces):
\begin{equation}
    \ket{\ideal^t} \cong \sum_{i_t\in\B^m}\sqrt{P_E(i_t)}\ket{\phi_{i_t}} \otimes \sum_{i_{-t}\in J_{i_t}}\beta_{i_{-t}|i_t}\ket{\phi_{i_{-t}}}\ket{E_{i_{-t}|i_t}}.
\end{equation}
where $P_E(i_t)$ is a value determined by Eve, which Alice and Bob cannot directly observe, and where $J_{x} = \{y\in\B^n \st w(y^\phase) \dc w(x^\phase)\}$.

Tensoring this with the honest sub-network states (which we do not apply Theorem \ref{thm:sample} to, but instead analyze directly) yields the following state (up to a permutation of subspaces):
\begin{align}
  &\sum_{\ell_t, r_t, i_t\in B^m}\sqrt{P_L(\ell_t)P_R(r_t)P_E(i_T)}\ket{\phi_{\ell_t}, \phi_{i_t}, \phi_{r_t}}\ket{F_{\ell_t},G_{r_t}}\notag\\
  &\underbrace{\otimes\sum_{\ell_{-t},r_{-t}\in\B^n}\sum_{i_{-t}\in J_{i_t}}\sqrt{P_L(\ell_t)P_R(r_t)}\beta_{i_{-t}|i_t}\ket{\phi_{\ell_{-t}}, \phi_{i_{-t}}, \phi_{r_{-t}}}\ket{E_{i_{-t}|i_t}}\ket{F_{\ell_{-t}},G_{r_{-t}}}}_{\mu(\ell_t,r_t,i_t)}\notag\\\notag\\
  & = \sum_{\ell_t, r_t, i_t\in B^m}\sqrt{P_L(\ell_t)P_R(r_t)P_E(i_T)}\ket{\phi_{\ell_t}, \phi_{i_t}, \phi_{r_t}}\ket{F_{\ell_t},G_{r_t}} \otimes \ket{\mu(\ell_t,r_t,i_t)}.\label{eq:mu}
\end{align}
Note we are only permuting the subspaces for clarity in presentation and to make the algebra simpler; in practice, the parties do not need to do this, and it does not affect the final min entropy bound if users do or do not perform this permutation of subspaces.

Now, the final network operation is performed $\net^{\otimes N}$ (Equation \ref{eq:net-op}) on the above state (Equation \ref{eq:mu}).  However, it is equivalent to analyze the case where the network operates on the subset $t$ first, Alice and Bob measure in the $X$ basis in the $t$ subset only, and then the network operates on the complement $-t$ while Alice and Bob then measure, in the $Z$ basis, the complement qubits $-t$.

After the network operation on $t$, the state becomes (again, writing the state after permuting subspaces for clarity):
\begin{align}
  \frac{1}{2^{2m}}&\sum_{\ell_t,r_t,i_t\in\B^m}\sqrt{P_L(\ell_t)P_R(r_t)P_E(i_t)}\ket{F_{\ell_t}, G_{r_t}}\notag\\
  &\otimes\sum_{z,u\in\B^m}(-1)^{g(\ell_t,i_t,r_t;z,u)}\ket{\mbox{``}z, u\mbox{''}}_{cl}\ket{\phi_{\ell_t + i_t + r_t}}_{AB}\otimes \ket{\mu(\ell_t,r_t,i_t)}.\label{eq:ideal-network-act}
\end{align}
(Recall, again, the additive notation for Bell states defined in Section \ref{section:notation}.)

Now, an $X$ basis measurement is made by Alice and Bob on those qubits indexed by $t$ and the parity is reported (in practice, Alice and Bob broadcast their measurement results and compute the parity, however, ultimately, the parity is the important information).  Let $X_0$ and $X_1$ be two POVM elements where $X_0 = \kb{+,+}_{AB} + \kb{-,-}_{AB}$ and $X_1 = \kb{+,-}_{AB} + \kb{-,+}_{AB}$.  That is, $X_0$ indicates a parity of zero in Alice and Bob's measurements, i.e., Alice and Bob receive the same measurement outcome, while $X_1$ indicates an error in Alice and Bob's $X$ basis measurement (i.e., they get opposite measurement outcomes in that basis).  The following is easy to verify:

\begin{align*}
  X_j\ket{\phi_x^y} &= \left\{\begin{array}{ll}
                              \ket{\phi_x^y} & \text{ if } y = j\\
                              0 & \text{ otherwise}
                            \end{array} \right.
\end{align*}

Finally, let $\widetilde{X}$ be the CPTP map which measures Alice's and Bob's qubits indexed by $t$, records the parity result in a separate register (which we will call the $P$ register), and finally traces out the post measured quantum state.  Applying this map $\widetilde{X}$ to the ideal state above (Equation \ref{eq:ideal-network-act}), and also tracing out the $G$ and $F$ systems (or, rather, the $t$ portion of the $G$ and $F$ systems), yields:
\begin{align}
  \sigma^t=&\frac{1}{4^{2m}}\sum_{z,u\in\B^m}\kb{\mbox{``}z,u\mbox{''}}_{cl}\otimes\left(\sum_{\ell, r, i\in\B^m}P_L(\ell)P_R(r)P_E(i)\kb{l^\phase \oplus r^\phase \oplus i^\phase}_P\kb{\mu(\ell, r, t)}\right) \nonumber \\
  =&\frac{1}{4^{2m}}\sum_{z,u\in\B^m}\kb{\mbox{``}z,u\mbox{''}}_{cl}\otimes\left(\sum_{i\in\B^m}P_E(i)\left[\sum_{\ell, r\in\B^m}P_L(\ell)P_R(r)\kb{l^\phase \oplus r^\phase \oplus i^\phase}_P\kb{\mu(\ell, r, t)}\right]\right) 
\end{align}
Note that, above, we have removed the $t$ subscript in the notation for clarity, since the context is clear.  Also, observe that the above state is a mixture of states which are not necessarily normalized; that is, the probability of observing any particular $z$ or $u$ is not necessarily uniform due to the influence of the $P_E(i)$ term.

The above is the ideal-state using states constructed from Theorem \ref{thm:sample}.  The real state $\rho^t_{ABE}$ can be found similarly to the above, just substituting in an alternative probability distribution for Eve along with removing the constraint of $J_{i_t}$ in the post measured state $\ket{\mu(\ell, r, t)}$.  However, Equation \ref{eq:td}, along with the fact that quantum operations cannot increase trace distance, ensures that the real state $\rho^t_{ABE}$ and the above ideal state $\sigma_{ABE}^t$, are $\epsilon$-close in trace distance (averaged over all subset choices $t$).

\textbf{Step 2 - Defining ``Ideal-Ideal'' States: } Analyzing the entropy in the state $\sigma^t$, defined above, despite being composed of ideal states according to Theorem \ref{thm:sample}, is still challenging.  Instead, we will define a new ``ideal-ideal'' state, $\tau^t$, which is $\epsilon$ close in trace distance to $\sigma^t$ (and therefore $2\epsilon$ close to the original real state $\rho$).

For any fixed $i \in \B^m$, let's consider the expected value of $w(\ell^\phase\oplus r^\phase\oplus i^\phase)$.  Note that the $\ell$ and $r$ strings are chosen independently of $i$ since they are created from the honest but noisy sub-networks (or, rather, Heidi in our case, who is simulating the honest sub-networks).  Let $Y$ be the random variable taking the value $y=\ell^\phase \oplus r^\phase \in \{0,1\}^m$ with probability:
\[
  P_Y(y) = \sum_{\substack{\ell, r\in\B^m\\ \ell^\phase\oplus r^\phase = y}} P_L(\ell)P_R(r).
\]
We compute the expected value of $w(Y\oplus i^\phase)$ for a given $i\in\B^m$.  Recall from Definition \ref{def:noise-parameter}, the following:
\[
  p^* = \sum_{\substack{x,y\in\B\\x^\phase\oplus y^\phase = 1}}P_L(x)P_R(y),
\]
which, we assume, is known (or lower-bounded) by Alice and Bob.

Let $x = w(i^\phase)$.  We divide $i^\phase$ into two substrings: Even (all zeros) and Odd (all ones).  Since $Y$ is independent of $i$, one will expect $m\cdot x\cdot p^*$ ones to appear in the odd substring of $i^\phase$ while $m(1-x)p^*$ ones to appear in the even substring.  Note any one appearing in the odd substring of $i^\phase$ will cause $Y\oplus i^\phase$ to be zero (i.e., will decrease the number of ones in $Y\oplus i^\phase$) while any one in the even substring will cause $Y\oplus i^\phase$ to be one (i.e., will increase the number of ones in $Y\oplus i^\phase$).  Thus, the expected number of ones in $Y\oplus i^\phase$ for fixed $i\in\B^m$ is easily found to be:
\begin{align}
  \mu_i = \mathbb{E}(w(Y\oplus i^\phase) \text { } | \text{ } i) &= \underbrace{(x - xp^*)}_{\text{from odd substring}} + \underbrace{(1-x)p^*}_{\text{from even substring}} = x(1-p^*) + (1-x)p^*\notag\\\notag\\
  &= w(i^\phase)(1-p^*) + (1-w(i^\phase))p^*.
\end{align}
Now, by Hoeffding's inequality, and our choice of $\delta'$ from Equation \ref{eq:thm:deltap}, it holds that:
\begin{equation}\label{eq:pr-bound}
  Pr\left(|w(Y\oplus i^\phase) - \mu_i| \le \delta'\right) \ge 1-2\exp(-2(\delta')^2m) = 1-\epsilon,
\end{equation}
where the probability is over the outcome of $Y = \ell^\phase \oplus r^\phase$.

For any $i\in\B^m$, define the set $G_i$ to be:
\begin{equation}
  G_i = \{y \in \{0,1\}^m \st |w(y\oplus i^\phase) - \mu_i| \le \delta'\}.
\end{equation}
From this, we may write $\sigma^t$ as follows:
\begin{align}
  \sigma^t &= \frac{1}{4^{2m}}\sum_{z,u}\kb{\mbox{``}z,u\mbox{''}}_{cl}\otimes\left(\sum_{i\in\B^m}P_E(i)\otimes\sum_{y\in\{0,1\}^m}P_Y(y)\kb{y\oplus i^\phase}_P\otimes\sum_{\substack{\ell,r\in\B^m\\\ell^\phase\oplus r^\phase = y}}P_{LR}(\ell,r|y)\kb{\mu(\ell, i, r)}\right)\notag\\\notag\\
         &= \frac{1}{4^{2m}}\sum_{z,u}\kb{\mbox{``}z,u\mbox{''}}\sum_{i\in\B^m}P_E(i)\otimes\left(\sum_{y\in G_i}P_Y(y)\kb{y\oplus i^\phase}_P\otimes\sum_{\substack{\ell,r\in\B^m\\\ell^\phase\oplus r^\phase = y}}P_{LR}(\ell,r|y)\kb{\mu(\ell, i, r)}\right.\notag\\
  &\left.+ \sum_{y\not\in G_i}P_Y(y)\kb{y\oplus i^\phase}_P\otimes\sum_{\substack{\ell,r\in\B^m\\\ell^\phase\oplus r^\phase = y}}P_{LR}(\ell,r|y)\kb{\mu(\ell, i, r)}\right)
\end{align}
where we define:
\begin{equation}
  P_{LR}(\ell,r|y) = \frac{P_L(\ell)P_R(r)}{P_Y(y)}.
\end{equation}

Now, consider the state $\tau^t$ which consists only of ``good'' strings $y$:
\begin{equation}\label{eq:tau}
  \tau^t = \frac{1}{4^{2m}}\sum_{z,u}\kb{\mbox{``}z,u\mbox{''}}_{cl}\sum_{i\in\B^m}P_E(i)\otimes\left(\frac{1}{N_i}\sum_{y\in G_i}P_Y(y)\kb{y\oplus i^\phase}_P\otimes\sum_{\substack{\ell,r\in\B^m\\\ell^\phase+r^\phase = y}}P_{LR}(\ell,r|y)\kb{\mu(\ell, i, r)}\right),
\end{equation}
where, from Equation \ref{eq:pr-bound}, we have:
\begin{equation}
  N_i = \sum_{y\in G_i}P_Y(y)\sum_{\substack{\ell,r\in\B^m\\\ell^\phase\oplus r^\phase = y}}P_{LR}(\ell,r|y) \ge 1-\epsilon.
\end{equation}
(For the above, note that $\sum_{\substack{\ell,r\in\B^m\\\ell^\phase\oplus r^\phase = y}}P_{LR}(\ell,r|y) = 1$.)

Now, it is not difficult to show, using basic properties of trace distance and the triangle inequality, that the trace distance between $\tau^t$ and $\sigma^t$ is upper bounded by $\epsilon$.  Indeed:
\begin{align*}
  \frac{1}{2}\trd{\sigma^t-\tau^t} &\le \frac{1}{2}\sum_{z,u}\frac{1}{4^{2m}}\sum_i P_E(i)\left|\left| \left(1-\frac{1}{N_i}\right)\sum_{y \in G_i}P_Y(y)\kb{y+i^\phase}_P\otimes\sum_{\substack{\ell,r\in\B^m\\\ell^\phase+r^\phase = y}}P_{LR}(\ell,r|y)\kb{\mu(\ell, i, r)}\right.\right.\\
                               &\left.\left.+ \sum_{y \not\in G_i}P_Y(y)\kb{y+i^\phase}_P\otimes\sum_{\substack{\ell,r\in\B^m\\\ell^\phase+r^\phase \equiv y}}P_{LR}(\ell,r|y)\kb{\mu(\ell, i, r)}\right|\right|\\\\
  &\le \frac{1}{2}\sum_{z,u}\frac{1}{4^{2m}}\left(\sum_i P_E(i)\left(\frac{1}{N_i}-1\right)N_i + (1-N_i)\right) \le \epsilon
\end{align*}
Note that the above holds for any subset $t$.  Since we have that $\frac{1}{2}\trd{\sigma^t-\tau^t} \le \epsilon$ for every $t$, it follows from Equation \ref{eq:td} and the fact that CPTP maps cannot increase trace distance, that $\frac{1}{2}\trd{\sum_tP_T(t)\kb{t}(\rho^t-\tau^t)} \le 2\epsilon$.  Thus, we can actually analyze the entropy in $\tau^t$ and then use Lemma \ref{lemma:cptp-entropy} to promote the analysis to the real state.

\textbf{Step 3 - Bounding the Min Entropy: }
Rewriting $\tau^t$, Equation \ref{eq:tau}, we find:
\begin{align}
  \tau^t &= \sum_{Q_X\in\{0,1\}^m}\kb{Q_X}_P \otimes \sum_{i,y \in G^{(Q_X)}}\frac{P_E(i)P_Y(y)}{N_i} \otimes \sum_{z,u} \frac{1}{4^{2m}}\kb{\mbox{``}z,u\mbox{''}}\otimes\sum_{\ell,r}P_{LR}(\ell,r|y)\kb{\mu(\ell, i, r)},
\end{align}
where:
\begin{align}
  G^{(Q_X)} &= \left\{i\in\B^m,y\in\{0,1\}^m \st y\oplus i^\phase = Q_X \text{ and } y \in G_i\right\}\notag\\
          &= \left\{(i,Q_X\oplus i^\phase) \st Q_X\oplus i^\phase\in G_i\right\}\notag\\
          & \cong \left\{i \in \B^m \st \left|w(Q_X) - \mu_i\right|\le \delta'\right\}\notag\\
  &=\left\{i\in\B^m \st \left|w(Q_X) - (w(i^{\phase})(1-p^*) + (1-w(i^{\phase}))p^*)\right| \le \delta'\right\}\notag\\
  &= \left\{i \in \B^m \st \left|w(i^\phase) - \frac{w(Q_X) - p^*}{1-2p^*}\right| \le \frac{\delta'}{1-2p^*}\right\} = \widetilde{G}^{(Q_X)}.\label{eq:good-states-q}
\end{align}
With this, we can continue to manipulate the expression for $\tau^t$ as follows:
\begin{align}
  \tau^t &= \sum_{Q_X\in\{0,1\}^m}\kb{Q_X}_P \otimes \sum_{i \in \widetilde{G}^{(Q_X)}}\frac{P_E(i)P_Y(Q_X\oplus i^\phase)}{N_i} \otimes \sum_{z,u} \frac{1}{4^{2m}}\kb{\mbox{``}z,u\mbox{''}}\otimes\sum_{\ell,r}P_{LR}(\ell,r|Q_X\oplus i^\phase)\kb{\mu(\ell, i, r)},
\end{align}

At this point, Alice and Bob measure the $P$ register to observe an actual parity string $Q_X$ representing the observed $X$ basis noise.  The state then collapses to $\tau^{(t,Q_X)}$ which may be written as (tracing out the $P$ register which is simply $\kb{Q_X}_P$ after the measurement):
\begin{align}
  \tau^{(t,Q_X)} &= \frac{1}{M_{Q_X}}\sum_{i \in \widetilde{G}^{(Q_X)}} \frac{P_E(i)P_Y(Q_X\oplus i^\phase)}{ N_i} \otimes \sum_{z,u}\frac{1}{4^{2m}}\kb{\mbox{``}z,u\mbox{''}} \overbrace{\sum_{\ell,r}P(\ell,r|Q_X\oplus i^\phase)\kb{\mu(\ell, i, r)}}^{\tau^{(t,Q_X,z,u,i)}}\notag\\
  &= \sum_{i\in\widetilde{G}^{(Q_X)}}\widetilde{P}_E(i)\sum_{z,u}\frac{1}{4^{2m}}\kb{\mbox{``}z,u\mbox{''}}\otimes \tau^{(t,Q_X,z,u,i)}.
\end{align}
where, above $M_{Q_X}$ is a normalization term, and we define $\widetilde{P}_E(i) = P_E(i)P_Y(Q_X\oplus i^\phase)/(N_iM_{Q_X})$.  Note that each $\tau^{(t,Q_X,z,u,i)}$ is normalized.

At this point, the network operation $\net$ completes on the remaining unmeasured subset $-t$ (those qubits in the $\tau^{(t,Q_X,z,u,i)}$ system) and Alice and Bob measure their remaining systems in the $Z$ basis.  Given a particular observed $Q_X$, we need to bound $\Hmin(A|E)_{\tau^{(t,Q_X)}}$, where the $A$ register is taken to mean Alice's $Z$ basis outcome after the final network operation is performed.


Recalling the definition of $\ket{\mu(\ell, i, r)}$ (see Equation \ref{eq:mu}) we can write the state of $\tau^{(t,Q_X,z,u,i)}$, after the final network operation is performed on the remaining systems, as:
\begin{align}
&  \tau^{(t,Q_X,z,u,i)} = \sum_{\ell,r \in \B^m}P_{LR}(\ell,r|Q_X\oplus i^\phase)\notag\\
&\times P\left(\sum_{\ell_{-t},r_{-t}\in\B^n}\sqrt{P_L(\ell_{-t})P_R(r_{-t})}\ket{F_{\ell_{-t}},G_{r_{-t}}}\frac{1}{2^{2n}}\sum_{z_{-t},u_{-t}\in\B^n}\ket{\mbox{``}z_{-t},u_{-t}\mbox{''}}_{cl}\right.\notag\\
  &\otimes\left.\sum_{i_{-t} \in J_i}(-1)^{g(\ell_{-t},i_{-t},r_{-t};z_{-t},u_{-t})}\ket{\phi_{\ell_{-t} + i_{-t} + r_{-t}}}\ket{E_{i_{-t}|i_t}}\right)
\end{align}
Then, tracing out the remaining $G$ and $F$ registers while measuring the $cl$ register, yields the state:
\begin{align}
  \tau^{(t,Q_X,z,u,i)} &= \sum_{z_{-t},u_{-t}} \sum_{\ell_{-t},r_{-t} \in \B^n}\widetilde{P}_{CLR}(z_{-t},u_{-t},\ell_{-t},r_{-t})\kb{\mbox{``}z_{-t},u_{-t}\mbox{''}}\notag\\
  &\otimes\underbrace{P\left(\sum_{i_{-t} \in J_i}(-1)^{g(\ell_{-t}, i_{-t}, r_{-t}; z_{-t},u_{-t})}\ket{\phi_{\ell_{-t} + \i_{-t} + r_{-t}}}\ket{E_{i_{-t}|i_t}}\right)}_{\tau^{(t,Q_X,z,u,i,\ell,r)}}.\label{eq:tau-final}
\end{align}

Recall that, for any observed $Q_X$, it holds that $i \in \widetilde{G}^{(Q_X)}$ since $\tau$ is our ideal-ideal state.  Combining this knowledge, and using Equation \ref{eq:min-entropy-mixed}, for any $t$ and $Q_X$ we have:
\begin{equation}
  \Hmin(A|E)_{\tau^{(t,Q_X)}} \ge \min_{i\in\widetilde{G}^{Q_X}}\min_{z,u,\ell,r}\Hmin(A|E)_{\tau^{(t,Q_X,z,u,i,\ell,r)}}.
\end{equation}
Using Lemma \ref{lemma:bell-entropy}, along with Equation \ref{eq:tau-final}, we have:
\begin{equation}
  \min_{i\in\widetilde{G}^{Q_X}}\min_{z,u,\ell,r}\Hmin(A|E)_{\tau^{(t,Q_X,z,u,i,\ell,r)}} \ge \min_{i\in\widetilde{G}^{(Q_X)}}\left(n - n\bar{h}\left(w(i^\phase) + \delta\right)\right) = n\left(1 - \max_{i\in\widetilde{G}^{(Q_X)}}\bar{h}(w(i^\phase)+\delta)\right)
\end{equation}
Considering the definition of $\widetilde{G}^{(Q_X)}$ in Equation \ref{eq:good-states-q}, it is clear that, for any $Q_X$ and for any $i \in \widetilde{G}^{(Q_X)}$, it holds that:
\[
  w(i^\phase) \le \frac{w(Q_X) - p + \delta'}{1-2p}
\]
leading us to conclude that:
\begin{align}
  \Hmin(A|E)_{\tau^{(t,Q_X)}} &\ge n\left(1 - \bar{h}\left(\frac{w(q) - p + \delta'}{1-2p} + \delta\right)\right)
\end{align}

Now, of course this was only the ideal-ideal state, however it holds for any choice of $t$ and observed $Q_X$.  Since $\frac{1}{2}\trd{\sum_tP_T(t)\kb{t}(\rho^t-\tau^t)} \le 2\epsilon$, we can use Lemma \ref{lemma:cptp-entropy} to finish the proof.  Indeed, note that all operations performed on the two systems were CPTP maps satisfying the lemma's hypothesis, and we take the subset choice and observation of $Q_X$ as the random variable $X$ in that lemma.
\end{proof}

Our above theorem, along with Equation \ref{eq:PA}, will then allow us to determine a bound on the actual key-rate of the protocol. Setting $\epsilon_{PA} = 17\epsilon + 4(2\epsilon)^{1/3}$, for user specified $\epsilon$, then, the overall key-rate of the protocol will be:
\begin{equation}\label{eq:finite-rate}
  \texttt{rate} \ge \frac{N-m}{N}\left(1 - \bar{h}\left(\frac{w(Q_X) - p^* + \delta'}{1-2p^*} + \delta\right)\right) - \leakEC - \frac{1}{N}\log\frac{1}{\epsilon},
\end{equation}
except with a failure probability of at most $\epsilon_{fail} = 2(2\epsilon)^{1/3}$.  Above, $\leakEC$ is the error correction leakage.

The above is a finite-key bound on the key-rate of this QKD protocol operating over a partially corrupted repeater chain.  We may also easily use this to derive an asymptotic bound as shown in the following corollary:

\begin{corollary}
  In the asymptotic regime, where the number of signals $N$ approaches infinity, assuming the observed $Z$ and $X$ basis noise are identical (namely $Q_X$) the key-rate becomes:
  \begin{equation}\label{eq:asymptotic-rate}
    \texttt{rate}_{\infty} = 1 - \bar{h}\left(\frac{w(Q_X) - p^*}{1-2p^*}\right) - h(w(Q_X)).
  \end{equation}
\end{corollary}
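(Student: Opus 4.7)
The plan is to start from the finite-key bound in Equation \ref{eq:finite-rate} and carefully take the limit $N \to \infty$, showing that every finite-key correction term vanishes while the main entropic expression converges to the desired form.

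First, I would choose the sample size $m$ and security parameter $\epsilon$ as functions of $N$ such that $m \to \infty$ but $m/N \to 0$ as $N \to \infty$ (for instance $m = N^{2/3}$), while letting $\epsilon = \epsilon(N)$ decay slowly enough that $\frac{1}{N}\log\frac{1}{\epsilon(N)} \to 0$ but also slowly enough that the sampling error terms vanish. Under such a scaling, the prefactor $(N-m)/N \to 1$; the sampling correction $\delta = \sqrt{(N+2)\ln(2/\epsilon^2)/(mN)}$ tends to zero (since the numerator grows only logarithmically while $m$ grows polynomially); and $\delta' = \sqrt{\ln(2/\epsilon)/(2m)}$ likewise tends to zero. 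Finally, the security-parameter slack $\frac{1}{N}\log\frac{1}{\epsilon}$ vanishes, and the failure probability $\epsilon_{\text{fail}} = 2(2\epsilon)^{1/3}$ also goes to zero.

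Next, I would handle the error correction leakage $\leakEC$. In the asymptotic regime, optimal error correction achieves the Shannon limit, so $\leakEC/N \to h(Q_Z)$, where $Q_Z$ is the observed $Z$ basis error rate between Alice and Bob. Under the stated assumption that the $Z$ and $X$ basis noise rates coincide (both equal to $w(Q_X)$), this gives $\leakEC/N \to h(w(Q_X))$. Combining with the vanishing of the finite-key corrections, the lower bound on the rate reduces to
\begin{equation*}
\texttt{rate}_\infty \ge 1 - \bar{h}\!\left(\frac{w(Q_X) - p^*}{1 - 2p^*}\right) - h(w(Q_X)),
\end{equation*}
using the fact that $\bar{h}$ is continuous at any argument strictly less than $1/2$, so the $+\delta$ inside $\bar h$ can be absorbed in the limit.

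The main technical obstacle is purely the bookkeeping of the joint scaling of $m$, $\epsilon$, and $\delta,\delta'$: one must verify that a single choice of these parameters simultaneously drives $\delta, \delta', m/N, \epsilon_{\text{fail}},$ and $\frac{1}{N}\log\frac{1}{\epsilon}$ to zero. This is straightforward with any polynomial growth like $m = N^{2/3}$ together with $\epsilon = 1/N$, so no deep new idea is required beyond what is already in Theorem \ref{thm:main}. The rest is routine: continuity of $\bar{h}$ and the standard asymptotic identification of $\leakEC$ with $Nh(Q_Z)$ yield the claimed expression.
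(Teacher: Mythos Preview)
Your approach is correct, but it differs from the paper's own proof. The paper does not take the direct limit of the finite-key expression; instead it reduces to collective attacks via a de Finetti argument, sets $m=\sqrt{N}$, and then invokes the quantum asymptotic equipartition property \cite{tomamichel2009fully} to pass from smooth min-entropy to the Shannon-entropy expression, together with the standard identification $\frac{1}{N-m}\leakEC \to h(w(Q_X))$.

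Your route is more elementary and more self-contained: since Theorem~\ref{thm:main} already gives a min-entropy bound valid against general attacks, taking the limit of Equation~\ref{eq:finite-rate} directly avoids any appeal to de Finetti or the AEP. The only ingredients you need are (i) a joint scaling of $m$ and $\epsilon$ making $\delta,\delta',m/N,\epsilon_{\text{fail}},\frac{1}{N}\log\frac{1}{\epsilon}$ all vanish, and (ii) continuity of $\bar h$; both are exactly as you describe, and your sample choice $m=N^{2/3}$, $\epsilon=1/N$ works. The paper's approach, by contrast, is the more traditional pathway in QKD (collective-attack analysis plus AEP) and arguably signals to a reader how the result connects to the standard asymptotic toolkit, but here it is strictly speaking redundant given that the finite-key bound is already general-attack tight. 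Either argument yields the same asymptotic rate; yours is shorter and does not introduce any new external theorems.
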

\begin{proof}
  First, we may assume collective attacks, whereby the total signal is of the form $\rho_{ABE}^{\otimes N}$ and then, later, use de Finetti style arguments \cite{konig2005finetti} to promote the analysis to general attacks.  In the asymptotic setting, we may also set $m = \sqrt{N}$.  The result then follows from the quantum asymptotic equipartition property \cite{tomamichel2009fully} and the fact that $\frac{1}{N-m}\texttt{leak}_{EC}$ will approach $h(w(Q))$.
\end{proof}


\section{Evaluation} \label{sec:eval}

We now evaluate our key-rate bounds in both finite key (Equation \ref{eq:finite-rate}) and asymptotic (Equation \ref{eq:asymptotic-rate}) settings.  As we will soon see, our security model allows for significantly higher key-rates when some of the network can be assumed honest, when compared to the standard BB84 assumption, which assumes the entire network is adversarial.  This shows the benefit of incorporate knowledge of honest repeaters into QKD key-rate calculations.

Our evaluation setup will assume a repeater chain with five repeaters (six fiber links total) connecting Alice to Bob.  We will consider scenarios where the number of honest nodes varies from zero (a fully corrupted network) to four.  We will also assume depolarizing channels connect all honest  parties, and that the total observed noise can be modeled as a depolarizing channel.  The latter is an assumption made just for our evaluations in order to simulate reasonable values for the expected observed noise $w(Q_X)$.  That is, we are assuming the adversary's attack introduces noise that can be modeled as a depolarizing channel, just for the sake of evaluation.  Finally, given this setup, we will evaluate both the finite key and the asymptotic behavior of the partially corrupted repeater chain and compare with the standard BB84 expression (which assumes a fully corrupted network).  For BB84, we will compare with finite key (which we will denote \texttt{BB84-F}) and asymptotic (denoted \texttt{BB84-A}) where appropriate.

To evaluate key-rates, we need to compute what the expected value of $w(Q_X)$ would be in a general repeater chain scenario.  Then, we must determine reasonable lower-bounds on $p^*$ in the given adversarial model.  To do this, we designed a Python program, which simulates the action of a repeater chain, computing the overall expected noise, given the link level noise.  We simulate the case where each link suffers depolarizing noise which is a valid noise model for our network assumption. Such a depolarizing channel maps a two qubit state $\rho$ to:
\begin{equation}
  \mathcal{E}_q(\rho) = (1-q)\rho + \frac{q}{4}I,
\end{equation}
where $I$ is the identity operator on two qubits.
In particular, given the total number of repeaters $c$, and a link level noise parameter $q$, our program will compute the total expected noise in the network (i.e., the value of $w(Q_X)$).   We then use this simulator to determine a bound on $p^*$ by having it compute what the noise would be in a smaller chain.
  
Although each link can have a different depolarizing parameter, we assume for our evaluations that they are all identically $q$.   In our evaluations, we assume each round is independent and identical to the others; thus after a single round, the final state of the full network (after all Bell measurements and Pauli corrections are performed) is:
\begin{equation}
    \rho_{full} = \sum_{i \in B} P_{q,c}(i^\bit, i^\phase) \kb{\phi_{i}}.
  \end{equation}
The X-basis error rate of the full network is then the probability of a phase-flip occurring:
\begin{equation}
    w\left(Q_X\right) = P_{q,c}(0, 1) + P_{q,c}(1, 1).
  \end{equation}
  We use our Python program to determine the value of $P_{q,c}(\bit,\phase)$, thus allowing us to determine $w(Q_X)$, the expected total observed error rate in the chain.  This also allows us to determine $p^*$ by simulating smaller honest sub-networks separately using our Python simulator (i.e., by determining $P_{q,c'}$ for $c' < c$).


  Figure \ref{fig:noise depolarization} shows how the total observed noise accumulates in repeater networks of varying levels of corruption.  In networks of fixed size, a larger honest sub-network corresponds to a higher percentage of the total measured noise being natural instead of adversarial.  This means that less information is leaked to the adversary, and higher key-rates can be obtained.

  \begin{figure}
\centering
\includegraphics[width=0.6\textwidth]
{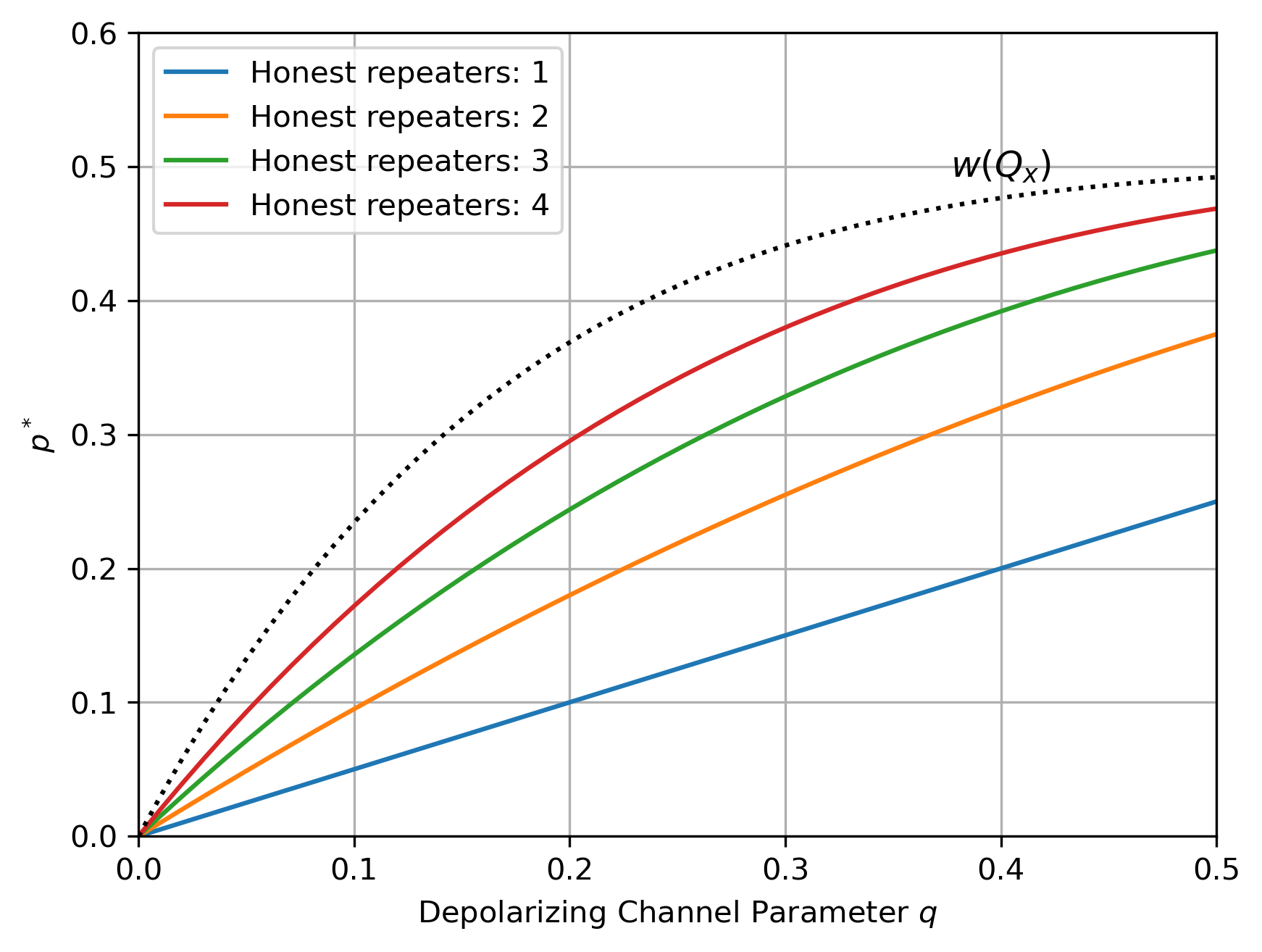}
\caption{Showing the total observed X-basis noise $w(Q_X)$ (top, Dashed line) and honest network noise parameters $p^*$ (Solid lines) as the link-level depolarizing noise $q$ increases.  Here we assume a five repeater (six link) network.  Noise parameter $p^*$ is computed for honest network sizes ranging from one to four repeaters.}
\label{fig:noise depolarization}
\end{figure}

We next evaluate our finite key-rate bound (Equation \ref{eq:finite-rate}), by setting $m=.07N$ and $\epsilon=10^{-36}$.  Such a setting for $\epsilon$ will imply a failure probability, and an $\epsilon_{PA}$-secure key (see Equation \ref{eq:PA}), both on the order of $10^{-12}$.  We assume $\leakEC$ is simply $1.2h(w(Q_X)+\delta)$.  We simulate networks with five total repeaters (six total links) and analyze the cases where zero, two, and four repeaters are honest.  Note that when the number of honest repeaters is zero, the entire network is assumed to be under adversarial control.

To benchmark our key rate against prior work, we use the finite BB84 key rate from \cite{tomamichel2012tight}, namely:
\begin{equation}
    \texttt{BB84-F} = \frac{N-m}{N}\left(1-\Bar{h}\left(w(Q_X) + \nu\right) - 1.2\Bar{h}\left(w(Q_X) + \nu\right)\right), 
\end{equation}
where
\begin{equation}
    \nu = \sqrt{\frac{N(m+1)\ln{(2/\epsilon)}}{m^2n}}.
\end{equation}

As seen in Figure \ref{fig:signals}, our finite key rate can significantly outperform standard BB84 (which assumes a completely adversarial network), so long as one is willing to assume at least some of the repeaters are honest but noisy. Note that, as seen in this figure, if one assumes the entire network is corrupted (thus $p^*=0$), our finite key rate requires more signals than BB84-F on fully corrupted networks to recover the same key-rate.  This is not unexpected, however, and is an artifact of our proof method.  In our proof, we need to sample twice to perform our ideal-ideal state analysis, thus giving worse bounds than BB84, with a fully adversarial network, where this is not required.  However, asymptotically the two key-rates (ours and standard BB84) converge.  When we take honest repeaters into account on partially corrupted networks, our key-rates outperform BB84-F, which is the entire point of our security proof.  Indeed, our results show that even assuming a small number of honest repeaters (even one next to Alice and Bob for instance), can lead to significant improvements in overall QKD performance.

\begin{figure}
\centering
\includegraphics[width=0.6\textwidth]{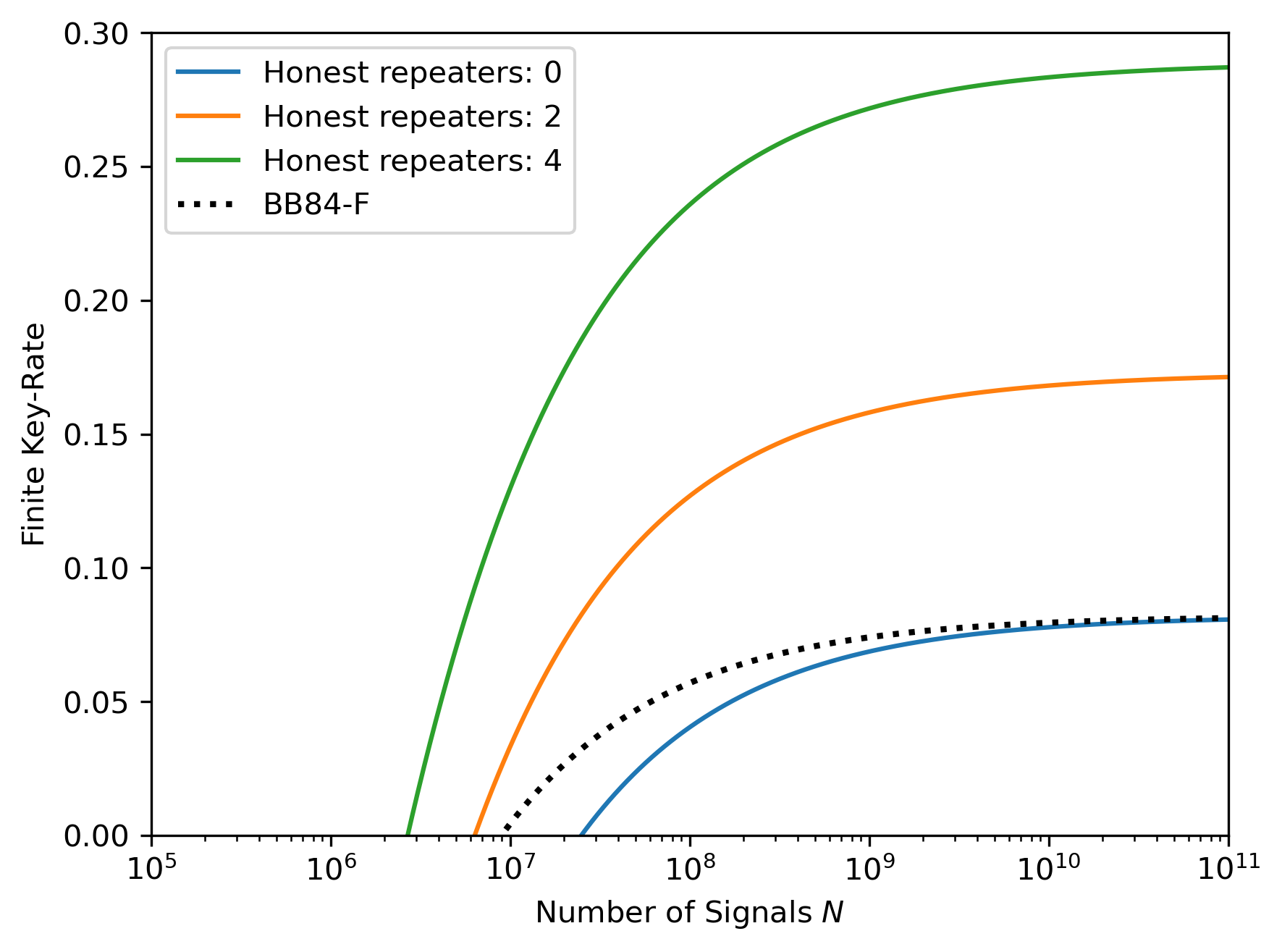}
\caption{Showing the finite key-rate ($y$-axis) as the number of signals $N$ increases ($x$-axis) in a five repeater chain for various numbers of honest repeaters (Solid lines). We compare to BB84 assuming a fully corrupted network (dashed line).  For this graph, we set the link-level noise in each link to be $q=3\%$.}
\label{fig:signals}
\end{figure}

Figure \ref{fig:finite noise} shows the noise tolerances of our finite key result, for $N=10^7$ and $10^8$ signal rounds.  On fully corrupted networks, we again see that BB84-F is more robust to total measured noise especially when less signals are transmitted (which is, again, an artifact of our proof method).  However, when assumptions of partial corruption can be made, our model provides higher key-rates and noise tolerances than BB84-F.

\begin{figure}
    \centering
    \begin{minipage}[b]{0.45\textwidth}
        \centering
        \includegraphics[width=\textwidth]{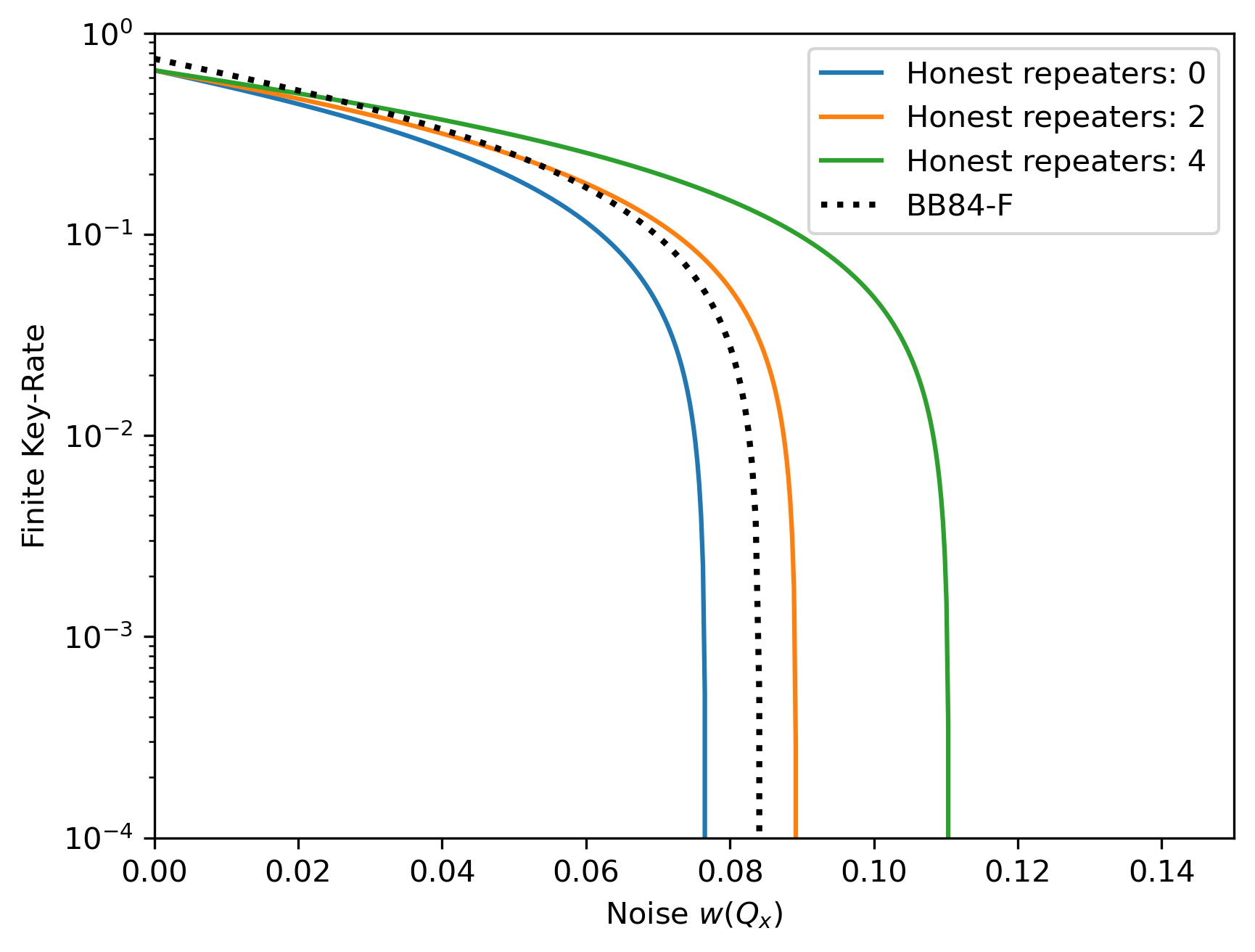}
    \end{minipage}
    \begin{minipage}[b]{0.45\textwidth}
        \centering
        \includegraphics[width=\textwidth]{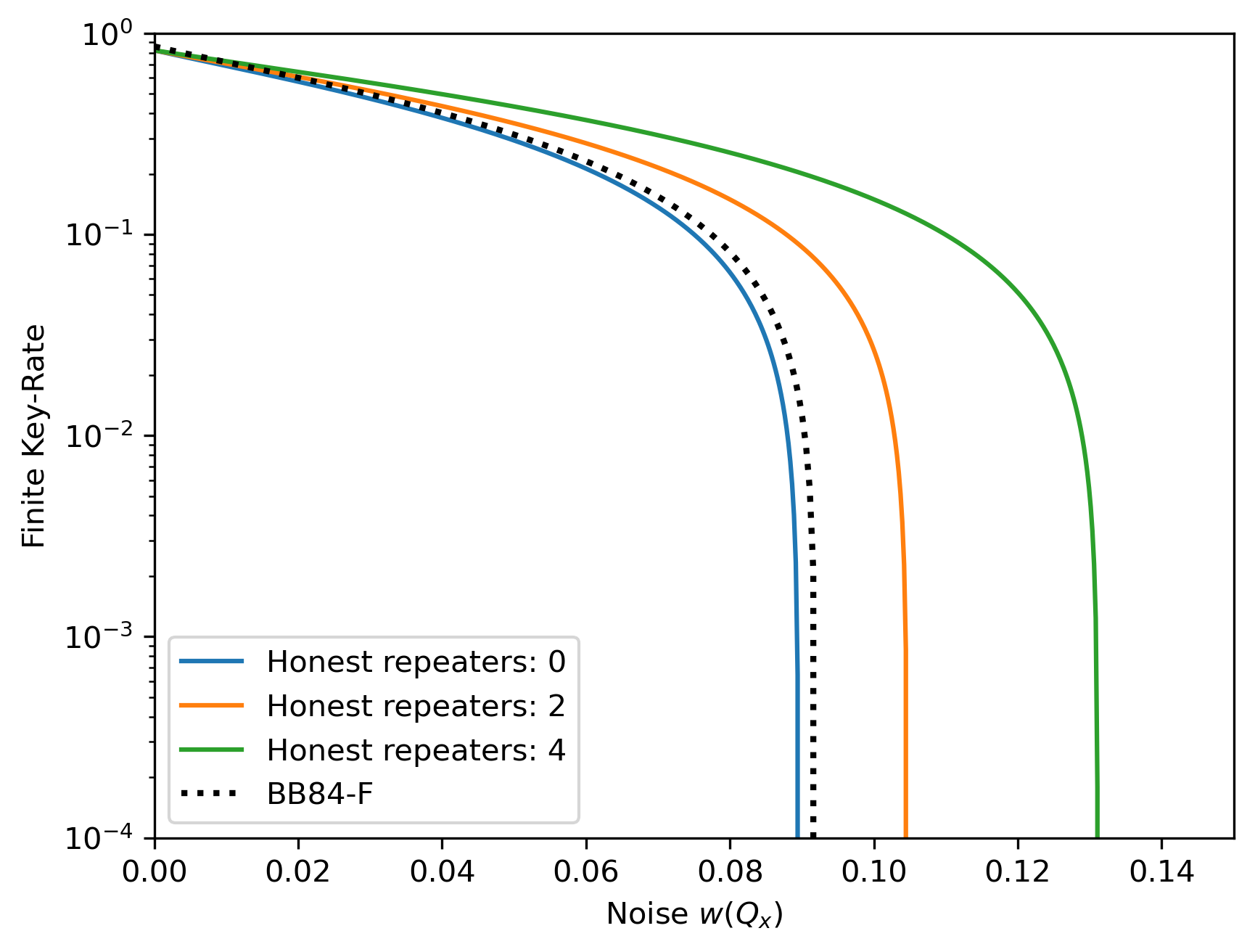}
    \end{minipage}
    \caption{Finite key-rates versus total observed X-basis noise in a five repeater network with $N=10^7$ signals (left) and $N=10^8$ signals (right).  Note that higher noise tolerances are possible when one assumes at least some of the repeaters are honest, but noisy.}
    \label{fig:finite noise}
\end{figure}

To compare our asymptotic key-rate (Equation \ref{eq:asymptotic-rate}) with BB84 in the standard security model, we use the well known BB84 rate from \cite{QKD-BB84-rate1}:
\begin{equation}
    \texttt{BB84-A} = 1 - 2\Bar{h}(w(Q_X)).
\end{equation}

We see in Figure \ref{fig:asymptotic noise} that our key-rates exactly align with BB84-A on fully corrupted networks.   On partially corrupted networks, our results produce higher key-rates for all noise levels with any non-zero number of honest repeaters.  This figure also shows that our result converges asymptotically to standard BB84 results, when one assumes a completely corrupted network (setting $p^*=0$).

\begin{figure}
\centering
\includegraphics[width=0.6\textwidth]
{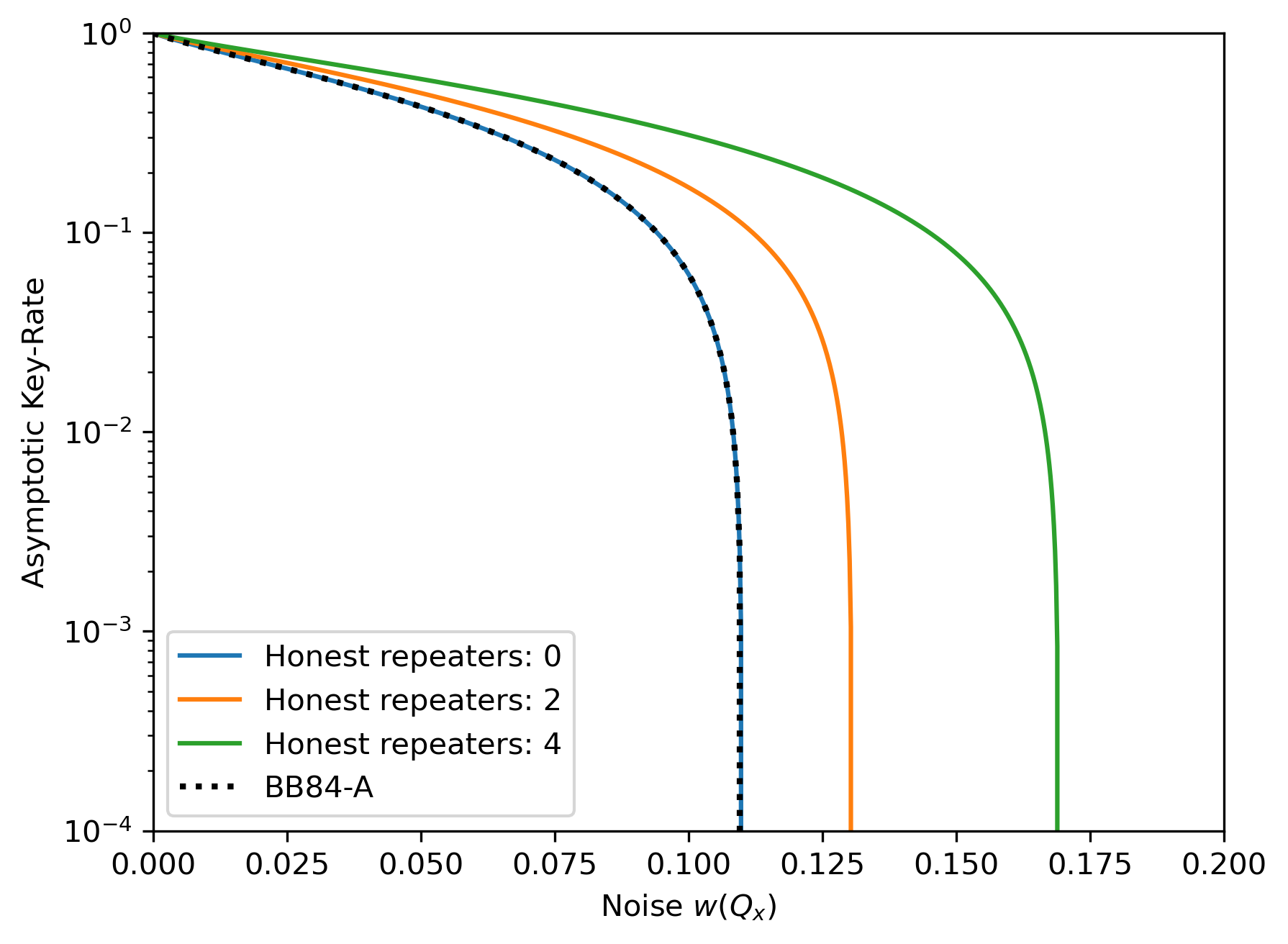}
\caption{Asymptotic key-rates versus total X-basis noise in a five repeater network.  With zero honest repeaters (i.e., a fully corrupted network), our key-rate result converges with asymptotic BB84.  As the number of honest repeaters in the network increases, our protocol outperforms asymptotic BB84, showing rigorously that increased noise tolerances and key-rates are possible in a partially corrupted setting.}
\label{fig:asymptotic noise}
\end{figure}

Taken together, these evaluations demonstrate that significantly increased noise tolerance and efficiency (i.e., increased key-rates) are possible when using our results and assuming at least some repeaters in a repeater chain are honest.

\section{Closing Remarks}

In this paper, we analyzed the case of a partially trusted (or partially corrupted) repeater chain.  Several motivating examples justify this security assumption: in particular, in a large-scale network, it is unreasonable to expect an adversary to be able to completely replace all devices on the network with perfect ones and, thus, ``hide'' within the expected natural noise.  There may also be cases where certain repeaters live in a safe area, where they can be trusted (as with trusted nodes).  While it is expected that higher key-rates are possible in this scenario, proving it in the finite-key scenario is non-trivial.  We derived a rigorous finite-key security proof for this setting; our proof techniques may be broadly applicable to other scenarios where there is a mix of trusted noise and adversarial noise in a quantum network.  We also evaluated our results, comparing with standard security assumptions and showed that, even with a small number of trusted repeaters, higher key-rates and noise tolerances are possible.  This shows the benefit in physically securing at least some portion of future quantum networks, perhaps those nodes and links near parties, even if one cannot secure the entire network from adversary attack.

Many interesting future problems remain.  First, it would be highly interesting to investigate lossy channels.  We suspect our proof method can be adapted to lossy channels, though we leave a rigorous proof as future work.  Investigating practical device imperfections would also be beneficial (e.g., multi-photon sources and imperfect detectors).  Finally, extending our proof to arbitrary networks, instead of just repeater chains, would be very interesting.


$ $\newline
\textbf{Acknowledgments: } AH was supported by the Defense Advanced Research Projects Agency (DARPA) ONISQ grant W911NF2010022, titled The Quantum Computing Revolution and Optimization: Challenges and Opportunities. WOK would like to acknowledge support from the NSF under grant number 2143644.


\end{document}